\DeclareMathAlphabet{\pazocal}{OMS}{zplm}{m}{n}
\DeclareMathOperator*{\argmax}{arg\,max}
\newif\ifreview
\newcolumntype{C}{>{\centering\arraybackslash}X}
\theoremstyle{plain}
\newtheorem{theorem}{Theorem}[section]
\theoremstyle{definition}
\newtheorem{definition}[theorem]{Definition}
\theoremstyle{remark}
\begin{document}

\articletype{}

\title{A CAV-based perimeter-free regional traffic control utilizing existing parking infrastructure}

\author{
\name{Hao Liu\textsuperscript{a}\thanks{CONTACT Hao Liu. Email: hao.liu1@maine.edu}, Vikash V. Gayah\textsuperscript{b}}
\affil{\textsuperscript{a}Department of Civil and Environmental Engineering, University of Maine, USA; \textsuperscript{b}Department of Civil and Environmental Engineering, The Pennsylvania State University, USA}
}

\maketitle

\begin{abstract}
This paper proposes a novel \textit{perimeter-free} regional traffic management strategy for networks under a connected and autonomous vehicle (CAV) environment. The proposed strategy requires a subset of CAVs to temporarily wait at nearby parking facilities when the network is congested. After a designated holding time, these CAVs are allowed to re-enter the network. Doing so helps reduce congestion and improve overall operational efficiency. Unlike traditional perimeter control approaches, the proposed strategy leverages existing parking infrastructure to temporarily hold vehicles in a way that partially avoids local queue accumulation issues. Further, holding the vehicles with the longest remaining travel distances creates a self-reinforcing mechanism which helps reduce congestion more quickly than perimeter metering control. Simulation results show that the proposed strategy not only reduces travel time for vehicles that are not held, but can also reduce travel times for some of the held vehicles as well. Importantly, its performance has been demonstrated under various configurations of parking locations and capacities and CAV penetration rates. 
\end{abstract}

\begin{keywords}
Connected and autonomous vehicles; MFD; Temporary holding; Regional traffic control; Maximum stability
\end{keywords}


\section{Introduction}

Despite substantial research effort in the area of traffic flow management, traffic congestion remains a demanding issue due to the continuous increase in vehicular traffic. In urban networks, traffic signals are one of the most cost effective means to manage traffic flows at intersections, which serve as the primary recurring bottlenecks in urban areas. Based on the spatial scale that is considered when making signal timing decisions, traffic signal control methods can be classified as isolated intersection control (e.g., \citet{wunderlich2008novel, YANG2016109, yu2018integrated, wan2018value}), coordinated signal control along arterial streets (e.g., \citet{bazzan2005distributed, zhang2015band, li2020connected}), and network-wide traffic signal control (e.g., \citet{mirchandani2001real,arel2010reinforcement, varaiya2013max, chen2020toward}). 

Compared to isolated and coordinated arterial level control, a well designed network-wide traffic signal control algorithm can be more beneficial for the operational efficiency of the entire network. However, due to the complexity and computational burden resulting from interdependence between the large number of elements, network-wide control remains a challenging task. Two main categories of network-wide traffic signal control exist based on how decisions at individual intersections are made relative to one another: centralized control (e.g., \citet{lowrie1990scats, robertson1991optimizing, park1999traffic, li2016traffic, liang2019deep}) and decentralized control (e.g., \citet{lammer2008self, varaiya2013max, kouvelas2011hybrid, rinaldi2016sensitivity, sha2019comparative, chen2020toward}). Centralized control methods usually rely on large-scale optimization programs in which signal timings at multiple intersections are determined jointly considering their influence on each other. Although this method can potentially achieve global optimal solutions, they are usually not scalable due to the computational complexity involved. On the contrary, signal timings at each intersection derived from decentralized control methods are independent of other intersections; however, the computational speed of this strategy is usually higher than the centralized control methods. Thanks to the computational efficiency, which is a crucial factor for real-time traffic signal control, more research effort has been shifted to this method in the past decade.

In general, both centralized and decentralized network-wide traffic signal control strategies require estimates of traffic state, such as queue length, at local intersections, which is difficulty to satisfy for most intersections. In addition, most of those algorithms do not consider the relationship between network-level traffic conditions and local signal timings. As a result, they cannot prevent or tackle regional congestion effectively. Perimeter control, which is a method derived based on the theory of Macroscopic Fundamental Diagram (MFD) \citep{godfrey1969mechanism, daganzo2007urban, geroliminis2008existence}, is an effective strategy to tackle both issues. A typical MFD reflects the uni-modal and concave relationship between the average operational efficiency, represented by average flow or vehicle exit rate, and the average use, represented by average density or vehicle accumulation, in a specific region of a network. In general, there exists a critical value for the average use of the network at which the maximum operational efficiency at a network-level can be achieved. Under high demand conditions, the network enters the congested domain once the average use exceeds this critical value, leading to a reduction in the operational efficiency. When traffic demand for the target region is high enough to lead congestion, perimeter control limits the inflows to the target region by adding restrictions at the perimeter of the region to maintain the accumulation or density inside the target region around the critical value, so that the operational efficiency can be near the maximum. Compared to the network-wide traffic signal control algorithms mentioned in the previous paragraph, perimeter metering control does not require the traffic state knowledge at all intersections, and it can result in improvements to overall network-level operations. 

Various perimeter control algorithms have been developed in the literature. For example, Bang-Bang control \citep{daganzo2007urban} allows no vehicle to enter the region whenever the accumulation exceeds the critical value and removes the restriction only once the density drops below the critical value. This strategy has been demonstrated to be an optimal control strategy for abstract systems in which the effect of queue accumulation is not considered. Another widely used perimeter control strategy is proportion-integral-type (PI) feedback control \citep{keyvan2012exploiting, ramezani2015dynamics, haddad2014robust}. Unlike Bang-Bang control, the restriction is adjusted based on the inflows in the previous time step, the change of the average density in the previous time step, and the distance between the current density and the critical density. Therefore, the restriction from this strategy is intensified or loosened more smoothly than Bang-Bang control. Reinforcement learning-based approaches \citep{chu2016large, ni2019cordon, zhou2021model, chen2022data, zhou2023scalable} have also recently been developed and shown to provide reasonable performances compared to MFD theory-based models.

Algorithms that combine perimeter control and decentralized control algorithms have been proposed to further enhance the efficiency of urban transportation systems. For example, \citet{tsitsokas2023two} developed a bi-level control framework to combine PI feedback control and Max Pressure (MP) algorithm \citep{tassiulas1990stability, varaiya2013max}, which is a popular decentralized traffic signal control algorithm, and unveiled that the proposed approach can further enhance the efficiency compared to a purely MP controlled network. \citet{su2023hierarchical} developed a similar approach to integrate reinforcement learning perimeter control algorithm and MP control. In both studies, perimeter control and MP are separately implemented at perimeter intersections and other intersections, respectively. More recently, \citet{liu2024nmp} proposed an algorithm, called N-MP, to incorporate perimeter control mechanism directly within the MP control structure. The N-MP algorithm still follows the decentralized control architecture of traditional MP algorithms and can tackle the heterogeneous vehicle distribution inside the protected region. 

Despite the promising performance of perimeter control algorithms, there exist common drawbacks that can limit their potential. For one, most perimeter control approaches require the identification of perimeter intersections, which are both pre-determined and fixed. However, as pointed out in \citep{li2021perimeter, doig2024and}, a stationary perimeter can impede the full potential of perimeter control when traffic conditions vary in both the temporal and spatial dimensions, which is not an unusual phenomenon for congested regions. More importantly, all perimeter control algorithms lead to temporary queue accumulation at the external boundary of the perimeter. These queues can not only reduce the operation efficiency by blocking destinations at the external region, but they can also keep vehicles from exiting the protected region if spillover occurs and might also contribute to high congestion and gridlock-like conditions near the perimeter. 

Triggered by these drawbacks, this paper considers a \textit{perimeter-free} control strategy to deal with regional congestion. The proposed method assumes that a subset of vehicles are able to receive and will comply with requests to temporarily exit the network and wait at nearby parking facilities in order to improve overall network efficiency. In particular, when the network is congested, vehicles with significant remaining travel distances are requested to use nearby parking spaces to wait. Doing so can help mitigate congestion and reduce travel time, allowing them to exit the network more quickly and reduce overall congestion levels. After a certain time, the held vehicles are allowed to re-enter the network to complete their trips. Importantly, prioritizing long-distance trips for holding creates a self-reinforcing congestion mitigation mechanism. First, temporarily removing these vehicles directly decreases network accumulation. Second, because the remaining vehicles tend to have shorter remaining distances, they can complete their trips more quickly, which increases the vehicle exit rate and further accelerates congestion dissipation.  

We assume for simplicity that the vehicles that are held are Connected and Autonomous Vehicles (CAVs). These CAVs are compliant with the holding requests for two reasons. First, many CAVs—such as delivery vehicles, empty cruising taxis, and fleet-operated vehicles—do not involve human passengers. These vehicles can fully comply with holding instructions without behavioral resistance. Second, even for occupied CAVs, the perception of holding from CAV passengers is different from that from drivers of HDVs. The former can continue working or engaging in other activities while the vehicle is being held, and they do not face the burden associated with re-entering traffic. As a result, their perceived inconvenience is expected to be lower, leading to higher expected compliance compared with human drivers. Note in this study, we assume that CAVs and human-driven vehicles (HDVs) share identical driving behavior parameters; the only distinction is that CAVs comply with holding instructions.

The proposed strategy harnesses both the control flexibility from CAVs and the available parking infrastructure to reduce the negative impact of vehicle accumulation at perimeter boundaries and further improve overall operational efficiency. The results offer insights for policymakers in creating and allocating parking facilities to support the implementation of this strategy. We would like to emphasize that the primary objective of this paper is to explore, for the first time, the benefits and feasibility of the proposed holding strategy as a novel traffic flow management approach, paving the way for relevant future research. Developing detailed theoretical models to enhance the strategy, including the optimization of critical parameters of the proposed method, is beyond the scope of this paper and will be addressed in future studies.

While previous studies have explored various aspects of CAV parking, including motion planning \citep{hsieh2008parking, wang2013hybrid}, parking lot layout  \citep{nourinejad2018designing, yan2022time}, parking pricing \citep{wang2021optimal, fulman2018establishing}, the objective of this paper is to explore the benefits of a proposed holding strategy that utilizes parking facilities in a CAV environment to improve real-time traffic operational efficiency—an area that, to the best of our knowledge, has not yet been studied. Therefore, the literature review focuses on traffic flow management strategies, including traffic signal control and perimeter control, which are popular network-wide techniques for mitigating regional traffic congestion, rather than on parking-related research that addresses other perspectives. One relevant study \citep{xu2017optimal} investigated parking provisions for e-hailing ride-sourcing vehicles (RVs), proposing a parking strategy for vacant RVs to reduce cruising vehicles in the network and thereby alleviate traffic congestion. Our objective, however, differs in that we focus on general private vehicles, including CAVs and potentially HDVs, which leads to different findings regarding the impact of parking provisions on traffic conditions. Overall, our results show that providing parking spaces during peak hours can help reduce congestion, whereas that study concluded that parking provisions for RVs are only necessary during non-peak hours, as cruising is minimal in peak periods.

Any traffic signal control algorithm can be used as the base control for the proposed strategy at intersections within the entire network. In this paper, we evaluate the performance of the proposed strategy integrated with the queue-based MP control developed by \citet{varaiya2013max}, using microscopic traffic simulation. To demonstrate its effectiveness as a network-level traffic management strategy, we compare it against existing perimeter control strategies, which are among the most effective methods for mitigating regional congestion. Simulation results show that the proposed approach can outperform the classical Bang-Bang control strategy and N-MP under various traffic conditions, even under partial CAV environments in which only a subset of vehicles are eligible to be held. Interestingly, the proposed algorithm can reduce travel time for both held and non-held vehicles. Moreover, we proved that if the base signal control has the maximum stability property, it still holds after implementing the proposed holding strategy.

The rest of this paper is organized as follows. Section \ref{sec:method} describes the proposed methodology and proved the maximum stability property. Section \ref{sec: simulation} depicts the microscopic traffic settings, followed by the simulation results shown in Section \ref{sec: results}. The findings are summarized and future research directions are discussed in Section \ref{sec: conclusion}. 
\section{Methodology}\label{sec:method}
We consider a general traffic network $\pazocal{F}=(\pazocal{N}, \pazocal{L})$ where $\pazocal{N}$ is the set of intersections, and $\pazocal{L}$ denote the set of directional links. A link is a pair of adjacent and connected intersections, denoted by $(i,j)$ where $i,j\in \pazocal{N}$. A movement is a pair of incoming and outgoing links at the same intersection serving vehicle transitions. It is represented as a tuple of intersection indices, such as $(i,j,k)$, where $(i,j)$ is an incoming link and $(j,k)$ is an outgoing link at intersection $j$. Let $\pazocal{U}^i$ ($\pazocal{D}^i$) indicate the set of upstream (downstream) intersections of intersection $i$ if $i$ is not an entry (exit) intersection. Let $\pazocal{M}^{i}$ and $\pazocal{S}^i$ indicate the set of movements and the set of admissible phases at intersection $i$, respectively. An admissible phase at intersection $i$ is represented by an array with a length of $|\pazocal{M}^{i}|$, in which each element is a binary variable indicating if the associated movement is served by the phase. Let $\mathbf{S}^{i*}(t)$ be the phase that is activated at intersection $i$ at time $t$ and $S^{i*}_{h,i,j}(t)$ be the element associated with the movement $(h,i,j)$ in $\mathbf{S}^{i*}(t)$.

We assume that all intersections in the network are controlled under a base signal control algorithm.\footnote{Any traffic signal control algorithm can be used as the base control for the proposed strategy at intersections within the entire network. In this paper, we evaluate the performance of the proposed strategy integrated with the queue-based MP control developed by \citet{varaiya2013max}, using microscopic traffic simulation.} We consider both fully CAV environments, where all vehicles comply with the control policy and temporarily park when requested, and mixed HDV-CAV environments, where only CAVs adhere to the parking requests. According to MFD theory, there exists a critical value for the average density, $\rho_{cr}$, at which the network efficiency is maximized. In the proposed algorithm, once the average density, $\rho(t)$, exceeds $\rho_{cr}$, the CAVs with a remaining travel distance longer than a threshold value, $\phi$, are requested to park at a nearby parking space, if a space is available. The decision to hold long-distance trips is justified both theoretically (see Section \ref{sec:mfd holding decision}) and numerically (see Section \ref{sec:randomparking_limit}). After a holding time of $\tau$, the held CAVs will attempt to re-enter the network to finish their trips as soon as there exists available space on the road. Note only vehicles that are already passing the parking locations at the moment when $\rho(t)>\rho_{cr}$ will be held at the associated parking locations, and held vehicles will re-enter the network through the road from which they exit the network. As a result, this strategy does not generate cruising or rerouting maneuvers for parking searching; therefore, the impacts associated with such maneuvers are not considered in this paper.

The rest of this section explains the traffic dynamics and the maximum stability property for the proposed holding strategy. Since a fully CAV environment is a special case of the mixed HDV-CAV environment, we focus on the latter for simplicity. 

\subsection{Traffic Dynamics}\label{sec:dynamics}
At any time $t$, vehicles of movement $(i,j,k)$, i.e., vehicles waiting on link $(i,j)$ to join link $(j,k)$, can be divided into two groups: held CAVs that are currently held at the associated parking location, and vehicles on the streets, called present vehicles. Let $x_{i,j,k}(t)$, $x_{i,j,k}^{\text{CAV,H}}(t)$, $x_{i,j,k}^{\text{CAV,P}}(t)$, and $x_{i,j,k}^{\text{HDV}}(t)$ indicate the total number of vehicles, number of held CAVs, number of present CAVs, and number of HDVs on movement $(i,j,k)$ at time $t$, respectively. Let $y_{i,j,k}(t)$ denote the number of departures of movement $(i,j,k)$ at time $t$. For simplicity, the number of vehicles on all links is modeled using point queues in which the physical length of vehicles is not considered. As a result, all held vehicles can re-enter the network once the held time achieves $\tau$. Then, 
\begin{equation}\label{eq:outflow}
    y_{i,j,k}(t) = \min\left(C_{i,j,k}(t)S^{j*}_{i,j,k}(t),x_{i,j,k}^{\text{CAV,P}}(t)+x_{i,j,k}^{\text{HDV}}(t)\right)
\end{equation}
where $C_{i,j,k}(t)$ is the saturation flow of movement $(i,j,k)$ at time $t$. The exact number of departing CAVs and HDVs, denoted by $y_{i,j,k}^{\text{CAV,P}}(t)$ and $y_{i,j,k}^{\text{HDV}}(t)$, are dependent on the vehicle distribution on link $(i,j)$ and the vehicle transition models. For instance, under a First-In-First-Out (FIFO) principle, the present CAVs and HDVs occupying the first $y_{i,j,k}(t)$ positions in the queue will depart upon receiving a green time. However, our method does not rely on specific queue serving disciplines.  

Let $q^{\text{CAV,P}}_{i,j,k}(t)$ and $q^{\text{HDV}}_{i,j,k}(t)$ indicate the CAV- and HDV-net flows for movement $(i,j,k)$ in time $t$, respectively, i.e., the arrivals from upstream movements subtracted by the departures to downstream movements. Note the transitions between held CAVs and present CAVs are excluded from $q^{\text{CAV,P}}_{i,j,k}(t)$. Then, 
\begin{equation}\label{eq:arrival-departure cav}
    q^{\text{CAV,P}}_{i,j,k}(t)=-y_{i,j,k}^{\text{CAV,P}}(t)+\sum_{h\in {\pazocal{U}^i}}y_{h,i,j}^{\text{CAV,P}}(t)R_{i,j,k}(t),
\end{equation}

\begin{equation}\label{eq:arrival-departure HDV}
    q^{\text{HDV}}_{i,j,k}(t)=-y_{i,j,k}^{\text{HDV}}(t)+\sum_{h\in {\pazocal{U}^i}}y_{h,i,j}^{\text{HDV}}(t)R_{i,j,k}(t),
\end{equation}
where $R_{i,j,k}(t)$ is the turning ratio from link $(i,j)$ to link $(j,k)$.

Then, the evolution of HDVs can be obtained by
\begin{equation}\label{eq: evolution_hdv}
    x_{i,j,k}^{\text{HDV}}(t+1)=x_{i,j,k}^{\text{HDV}}(t)+q^{\text{HDV}}_{i,j,k}(t)
\end{equation}

Since this algorithm holds vehicles only for a predefined threshold duration, the holding times for each held vehicle are tracked. Let $e_v(t)$ indicates the cumulative holding time of held vehicle $v$. Then, the number of holding vehicles at time $t$ that will be allowed to re-enter the network in the next time step is
\begin{equation}\label{eq: enter}
    n^{\text{enter}}_{i,j,k}(t) = \sum_{v=1}^{x_{i,j,k}^{\text{CAV,H}}(t)}\mathbbm{1}(e_v(t)>=\tau)
\end{equation}

Note that $e_v(t)$ accounts for the delay incurred due to both parking and re-entering the network. In other words, the waiting time includes the duration from when a CAV exits the travel lanes to enter an available parking spot to when it re-enters the roadway in the network. Further, while ignored in this section, the simulation results explicitly consider traffic impacts of vehicles exiting/reentering the network before/after their parking maneuvers.

Let $n^{\text{hold}}_{i,j,k}(t)$ denote the number of vehicles that can be potentially held in the next step, i.e., the sum of number of present CAVs and new incoming CAVs with a remaining travel distance longer than $\phi$, if the average density in the network exceeds the critical density, which can be expressed as:
\begin{equation}\label{eq:nhold}
    n^{\text{hold}}_{i,j,k}(t) = \min\left(\mathbbm{1}_{[\rho_{\text{cr}}, +\infty]}\rho(t)\sum_{v=1}^{x_{i,j,k}^{\text{CAV,P}}(t)+q^{\text{CAV,P}}_{i,j,k}(t)}(1-\delta_v)\mathbbm{1}_{[\phi, +\infty]}(g_v(t)),K^p_{i,j,k}-x_{i,j,k}^{\text{CAV,H}}(t)\right)
\end{equation}
where $\delta_v$ denotes a binary variable indicating whether vehicle $v$ has been held, taking the value 1 if held and 0 otherwise; $g_v(t)$ is the remaining travel distance of vehicle $v$ at time $t$, and $K^p_{i,j,k}$ is the total parking capacity associated with movement $(i,j,k)$, which serves as the upper bound of the number of held CAVs from movement $(i,j,k)$. Longer trips are selected for holding because this decision reduces the average remaining travel distance of vehicles circulating in the network, thereby improving overall vehicle exit rate. A detailed theoretical justification is provided in Section \ref{sec:mfd holding decision}.

Note this paper assumes that the layout of parking resources, including the locations and parking capacities, $K^p$'s, is known. The first term in Eq. \eqref{eq:nhold} is the holding demand, and the second term is the available parking supply for movement $(i,j,k)$, respectively.  

Next, the evolution of held CAVs can be expressed as
\begin{subequations}\label{eq:evolution_xh}
	\begin{empheq}[left={x_{i,j,k}^{\text{CAV,H}}(t+1)=\empheqlbrace\,}]{alignat=2}
	& n^{\text{hold}}_{i,j,k}(t), && \quad \text{if $x_{i,j,k}^{\text{CAV,H}}(t)=0$} \label{eq:xh1}\\
	&x_{i,j,k}^{\text{CAV,H}}(t)-n^{\text{enter}}_{i,j,k}(t)+n^{\text{hold}}_{i,j,k}(t) && \quad \text{otherwise}\label{eq:xh2}
	\end{empheq}
\end{subequations}

Similarly, the evolution of present vehicles on movement $(i,j,k)$ can be expressed as,
\begin{subequations}\label{eq:evolution_xp}
	\begin{empheq}[left={x_{i,j,k}^{\text{CAV,P}}(t+1)=\empheqlbrace\,}]{alignat=2}
	& x_{i,j,k}^{\text{CAV,P}}(t)+q^{\text{CAV,P}}_{i,j,k}(t)-n^{\text{hold}}_{i,j,k}(t), && \quad \text{if $x_{i,j,k}^{\text{CAV,H}}(t)=0$} \label{eq:xp1}\\
	&x_{i,j,k}^{\text{CAV,P}}(t)+q^{\text{CAV,P}}_{i,j,k}(t)+n^{\text{enter}}_{i,j,k}(t)-n^{\text{hold}}_{i,j,k}(t) && \quad \text{otherwise}\label{eq:xp2}
	\end{empheq}
\end{subequations}

We assume that CAVs cannot be held at the time step within which they enter a new road. Therefore, $q^{\text{CAV,P}}_{i,j,k}(t)$ only contributes to the evolution of the present CAVs while it does not influence the the held CAVs in the next step. Combining Eqs. \eqref{eq:evolution_xh} and \eqref{eq:evolution_xp} shows that the sum of number of CAVs in both groups always converges to the total number of CAVs. 

\subsection{Theoretical justification for holding long-remaining-distance trips}\label{sec:mfd holding decision}
The decision to hold vehicles with long remaining travel distances provides both direct and indirect performance benefits. The direct benefit arises from reducing network accumulation during congested conditions. Under the unimodal MFD relationship, decreasing accumulation in the congested regime increases network flow, thereby improving system performance. Beyond this accumulation effect, additional benefits emerge from selectively holding vehicles with large remaining travel distances. According to MFD theory, the average network flow can be expressed as the total vehicle-distance traveled per unit time:
\begin{equation}
    Q=\frac{\sum_{v\in \eta}s_vl_v}{L}
\end{equation}
where $\eta$ is the set of vehicles on the network, $s_v$ is the speed of vehicle $v$, $l_v$ is a characteristic link length, and $L$ is the total network length. Equivalently, drawing on Little's Law, the average trip completion rate $\lambda$ relates to accumulation $n$ and mean remaining travel distance $\Tilde{d}$ as: 
\begin{equation}
    n=\lambda\frac{\Tilde{d}}{\Tilde{s}}
\end{equation}
or more compactly, since the average travel speed $\Tilde{s}$ can be solely determined by accumulation, assuming traffic distribution is homogeneous, the trip completion rate can be expressed as:
\begin{equation}\label{eq:nef}
    \lambda=\frac{Q(n)}{\Tilde{d}}
\end{equation}

Equation \ref{eq:nef} provides a key insight: for a given accumulation level $n$ (and thus a given flow $Q(n)$ in the congested regime), the trip completion rate $\lambda$ is inversely proportional to the mean remaining travel distance $\Tilde{d}$. Selectively holding vehicles with long remaining travel distances reduces $\Tilde{d}$ among vehicles circulating in the network. This increases the trip completion rate, which subsequently reduces accumulation and further increases $Q(n)$ through the MFD relationship. The proposed strategy creates a self-reinforcing recovery process: holding long-distance vehicles reduces $\Tilde{d}$, which in turn increases trip completion rate, which further reduces n and increases $Q(n)$. Consequently, holding long-distance trips accelerates congestion dissipation more effectively than indiscriminate holding or holding short-distance trips. The performance of this design choice is demonstrated in Section \ref{sec:randomparking_limit}.

\subsection{Maximum stability property}

Network throughput is an indicator widely used to assess the performance of traffic control approaches. A control algorithm has the \emph{maximum stability} property, which is a favorable property investigated extensively for adaptive traffic signal control methods \citep{varaiya2013max, smith2019traffic}, if it maximizes the network throughput, or equivalently, if it can accommodate all feasible demand patterns. This section proves that if the maximum stability property holds for the base control algorithm, it is preserved by the proposed holding strategy. 
\begin{definition}\label{def: stabilize}
A control policy is said to stabilize the network under demand $\mathbf{d}$ or accommodate demand $\mathbf{d}$ for the network, if the average number of vehicles in the network over any period is upper bounded, i.e.,

\begin{equation}\label{eq: net stable}
    \frac{1}{T}\sum_{t=1}^{T}\sum_{(i,j,k)}\mathop{\mathbb{E}}\left[x_{(i,j,k)}(t)\right]< M, \quad \forall T\in \mathbb{Z^+},
\end{equation}
where $0<M<\infty$ is a constant. 
\end{definition}

Note that, since $x_{(i,j,k)}(t)>0$, Eq. \eqref{eq: net stable} implies that when the network is stabilized, the number of vehicles in the network is always upper bounded. In addition, under the assumption of point queues, which has been commonly employed as an assumption for the proof of maximum stability property (e.g., \citet{varaiya2013max, smith2019traffic}), spillover effect is not considered, and the stabilization is independent of the initial state of the network. 

\begin{definition}
    A demand $\mathbf{d}$ is admissible if there exists a signal control strategy that can stabilize the network under $\mathbf{d}$.
\end{definition} 

\begin{definition}
    A signal control policy $\pazocal{P}$ has the maximum stability property if it can stabilize the network under any $\mathbf{d}\in \mathbf{D}_f$, where  $\mathbf{D}_f$ is the set of admissible demand.
\end{definition}

\begin{theorem}
If policy $\pazocal{P}$ has the maximum stability and the maximum number of holding vehicles $x_{\text{max}}^{\text{CAV,H}}$ in the network is bounded, the incorporation of the proposed holding strategy into $\pazocal{P}$, denoted by $\pazocal{P}^\textsuperscript{H}$, has the maximum stability property.
\end{theorem}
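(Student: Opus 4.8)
The plan is to reduce the stability of $\pazocal{P}^{\text{H}}$ to that of the base policy $\pazocal{P}$ through a decomposition-and-coupling argument, rather than re-deriving a Lyapunov drift (which would be tied to the specific structure of a particular $\pazocal{P}$). First I would split the per-movement count as $x_{i,j,k}(t)=\big(x^{\text{CAV,P}}_{i,j,k}(t)+x^{\text{HDV}}_{i,j,k}(t)\big)+x^{\text{CAV,H}}_{i,j,k}(t)$ and sum over all movements, so that the network total equals the number of \emph{present} vehicles plus the number of \emph{held} vehicles. By hypothesis the held total is uniformly bounded by $x_{\text{max}}^{\text{CAV,H}}$, so its time-average expectation is at most $x_{\text{max}}^{\text{CAV,H}}<\infty$. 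Consequently, verifying Definition~\ref{def: stabilize} for $\pazocal{P}^{\text{H}}$ reduces to bounding the time-average expected number of present vehicles.

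Next I would interpret holding as an in-place, bounded delay. Since a held CAV re-enters on the \emph{same} movement after exactly $\tau$ steps (Eqs.~\eqref{eq: enter}--\eqref{eq:evolution_xp}) and never changes its route, the present-vehicle subsystem under $\pazocal{P}^{\text{H}}$ can be coupled to a virtual copy of the network, controlled by the base policy $\pazocal{P}$, in which each held vehicle is relabelled as one that momentarily leaves its movement and is re-injected $\tau$ steps later; write $\mathbf{d}'$ for the resulting internally delayed arrival pattern seen by $\pazocal{P}$. Because the outflow rule \eqref{eq:outflow} depends only on present vehicles and the activated phase, along this coupling the service experienced by the present queues is exactly what $\pazocal{P}$ would produce on the virtual network.

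The third step is to show that $\mathbf{d}'$ remains admissible. Because at most $x_{\text{max}}^{\text{CAV,H}}$ vehicles are parked at any instant and each is delayed by the fixed finite time $\tau$, the cumulative arrival count to every movement under $\mathbf{d}'$ differs from that under $\mathbf{d}$ by a uniformly bounded amount for all $t$; hence $\mathbf{d}'$ and $\mathbf{d}$ share the same long-run average rates. As admissibility is a property of these rates, $\mathbf{d}'\in\mathbf{D}_f$ whenever $\mathbf{d}\in\mathbf{D}_f$. Invoking the maximum stability of $\pazocal{P}$ then bounds the time-average expected present count, which together with the bounded held count yields \eqref{eq: net stable} for $\pazocal{P}^{\text{H}}$; since $\mathbf{d}\in\mathbf{D}_f$ was arbitrary, $\pazocal{P}^{\text{H}}$ inherits the maximum stability property.

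I expect the main obstacle to be making the coupling airtight under \emph{feedback} control: a max-stable $\pazocal{P}$ generally selects phases from the current queue lengths, so even a bounded discrepancy between the present queues and their virtual counterparts could in principle propagate into different phase choices and break the step-by-step correspondence. Handling this cleanly requires either arguing that $\pazocal{P}$'s decisions on the present queues coincide with its decisions on the virtual network, or replacing the exact trajectory coupling with a drift/comparison estimate showing that a uniformly bounded perturbation of the arrival counts cannot destabilize a policy that already stabilizes the limiting rates. I would also verify that the point-queue assumption and the FIFO bookkeeping separating $y^{\text{CAV,P}}_{i,j,k}$ from $y^{\text{HDV}}_{i,j,k}$ are fully consistent with treating held vehicles as a pure, route-preserving delay.
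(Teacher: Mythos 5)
Your proposal is correct in substance and arrives at the same core reduction as the paper: split the network count into present and held vehicles, dispose of the held component via a uniform bound (the parking capacities in Eq.~\eqref{eq:nhold}), and transfer the stability of $\pazocal{P}$ to the present-vehicle subsystem, which $\pazocal{P}^{\text{H}}$ controls by applying $\pazocal{P}$ to present counts only. Where you genuinely differ is in how that transfer is justified. The paper perturbs the \emph{state}: its key step, Eq.~\eqref{eq:proof}, writes the one-step conditional expectation of present vehicles under $\pazocal{P}^{\text{H}}$ as the expectation under $\pazocal{P}$ evaluated at the shifted state $\mathbf{X}^{P}(t)+\mathbf{N}^{\text{enter}}(t)-\mathbf{N}^{\text{hold}}(t)$, observes that both shift vectors are bounded by the parking capacity, and then invokes the maximum stability of $\pazocal{P}$; the total bound is $M'+\sum_{i,j,k}K^p_{i,j,k}$. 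You instead perturb the \emph{demand}: you fold holds and re-entries into an internally delayed arrival process $\mathbf{d}'$ and argue $\mathbf{d}'$ stays admissible because boundedly many, finitely delayed relocations leave long-run rates unchanged. Your framing makes explicit the step the paper leaves implicit (why the perturbed system is still covered by $\pazocal{P}$'s guarantee), at the cost of having to certify a new demand. Two further remarks. First, the obstacle you flag---feedback phase decisions diverging between the real and virtual networks---is not actually a problem here: $\pazocal{P}^{\text{H}}$ is defined to run $\pazocal{P}$ on present-vehicle counts (this is exactly how Q-MP\textsuperscript{H} replaces $x_{i,j,k}(t)$ by $x^{\text{CAV,P}}_{i,j,k}(t)+x^{\text{HDV}}_{i,j,k}(t)$), and your virtual network's queue state equals those counts by construction, so the decisions coincide identically and the coupling is exact rather than approximate. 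Second, the residual weakness in your route---$\mathbf{d}'$ is trajectory-dependent, since holding triggers on the realized density $\rho(t)$, so it is not an exogenous member of $\mathbf{D}_f$---is shared in equal measure by the paper, whose proof likewise applies a stability property stated for exogenous admissible demands to a state-dependent bounded perturbation without any drift argument; a fully rigorous version of either route would need a Foster--Lyapunov-type comparison showing that a uniformly bounded perturbation of arrivals or of the state cannot destroy negative drift.
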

\begin{proof}
Assume that at time $t$, the total number of vehicles in the network is upper bounded, i.e.,
\begin{equation}
    \exists M\in(0, \infty), \quad \sum_{(i,j,k)}x_{i,j,k}(t)<M
\end{equation}

Let $\mathbf{N}^{\text{enter}}(t)$ and $\mathbf{N}^{\text{hold}}(t)$ be the vector of number of re-entering and request-to-hold CAVs, defined by Eq. \eqref{eq: enter} and Eq. \eqref{eq:nhold}, respectively, of all movements. Let $\mathbf{X}^{P}(t)$ indicate the total number of present vehicles, including both CAVs and HDVs, in the network, i.e., $\mathbf{X}^{P}(t)=\mathbf{X}^{\text{CAV,P}}(t)+\mathbf{X}^{\text{HDV}}(t)$.

It is reasonable to assume that the control decision made by the base control policy $\pazocal{P}$ at time $t$ is determined based on the traffic state $\mathbf{X}(t)$ and the demand $\mathbf{d}(t)$. Since the original control policy $\pazocal{P}$ only considers present vehicles for signal timings, the conditional expectation of the number of present vehicles in the network at time $t+1$ can be expressed as,
\begin{equation}\label{eq:proof}
\mathbb{E}^{\pazocal{P}^H}\left[\mathbf{1}^\top \mathbf{x}^P(t+1)|\mathbf{X}(t), \mathbf{d}(t)\right]=\mathbb{E}^{\pazocal{P}}\left[\mathbf{1}^\top \mathbf{x}^P(t+1)|\mathbf{X}^P(t)+\mathbf{N}^{\text{enter}}(t)-\mathbf{N}^{\text{hold}}(t), \mathbf{d}(t)\right]
\end{equation}

Although the actual values of $\mathbf{N}^{\text{enter}}(t)$ and $\mathbf{N}^{\text{hold}}(t)$ depends on the traffic condition at time $t$, they are always bounded since the parking capacity is assumed to be limited, indicated by Eq. \eqref{eq:nhold}. Since control policy $\pazocal{P}$ has the maximum stability property, according to Definition \ref{def: stabilize}, for any $\mathbf{d}\in \mathbf{D}_f$, the unconditional expectation of present vehicles from Eq. \eqref{eq:proof} is bounded by a constant $M'$. Additionally, because the total number of held vehicles is always upper bounded by $\sum_{i,j,k}K^p_{i,j,k}$, the expectation of total number of vehicles at time $t+1$ is bounded by $M=M'+\sum_{i,j,k}K^p_{i,j,k}$.

Above all, if $\mathbf{1}^\top \mathbf{x}(1)$ is bounded, the evolution of number of vehicles in the network resulting from the control policy $\pazocal{P}^\textsuperscript{H}$ satisfies \eqref{eq: net stable} for any $\mathbf{d}\in \mathbf{D}_f$. As a result, policy $\pazocal{P}^\textsuperscript{H}$ has the maximum stability property. 

\end{proof}

\section{Numerical Experiment Design}\label{sec: simulation}

The performance of the proposed control strategy is tested in microscopic traffic simulation using the AIMSUN software. Although the proposed holding strategy is independent of the base traffic signal control strategy, to maintain maximum stability in our tests, we consider the case where all intersections are controlled by the queue-based Max Pressure algorithm proposed by \citet{varaiya2013max}, Q-MP, which has the maximum stability property. The holding strategy proposed in the previous section is then applied on Q-MP, called Q-MP\textsuperscript{H}, to show the superiority of the proposed approach on the traffic operational efficiency at a network-level. Both average vehicle delay and MFD --- represented by the relationship between vehicle exit rate and the average density in the network --- are used to quantify the effectiveness. Since the proposed algorithm introduces a novel paradigm for network-wide traffic management, perimeter control, which has been widely studied for tackling regional congestion, is employed as the baseline method. Specifically, Q-MP\textsuperscript{H} is compared to two perimeter control algorithms: Bang-Bang control and N-MP. The performance of the proposed strategy is investigated in both a pure CAV environment and a mixed CAV-HDV traffic environment. The rest of this section describes the simulation settings and the two benchmark perimeter control algorithms. 

\subsection{Simulation settings}\label{sec: net setup}
We consider here a 10x10 square grid network of two-way streets; see Figure \ref{fig:net_sim}. The grid network layout has been widely utilized in traffic flow management for urban transportation systems \citep{chu2016large, tan2019cooperative,chu2016large}. The length of all links is set to 200 m. Since Q-MP requires queue lengths for all movements as the input to determine signal timings, we assume that all links have three dedicated lanes, one for each of the left-turn, through, and right-turn movements. The free flow travel speed is 50 km/h, and the saturation for each link is 1,800 vphpl.

\begin{figure}[!htbp]
	\centering
	\includegraphics[width=0.8\textwidth]{./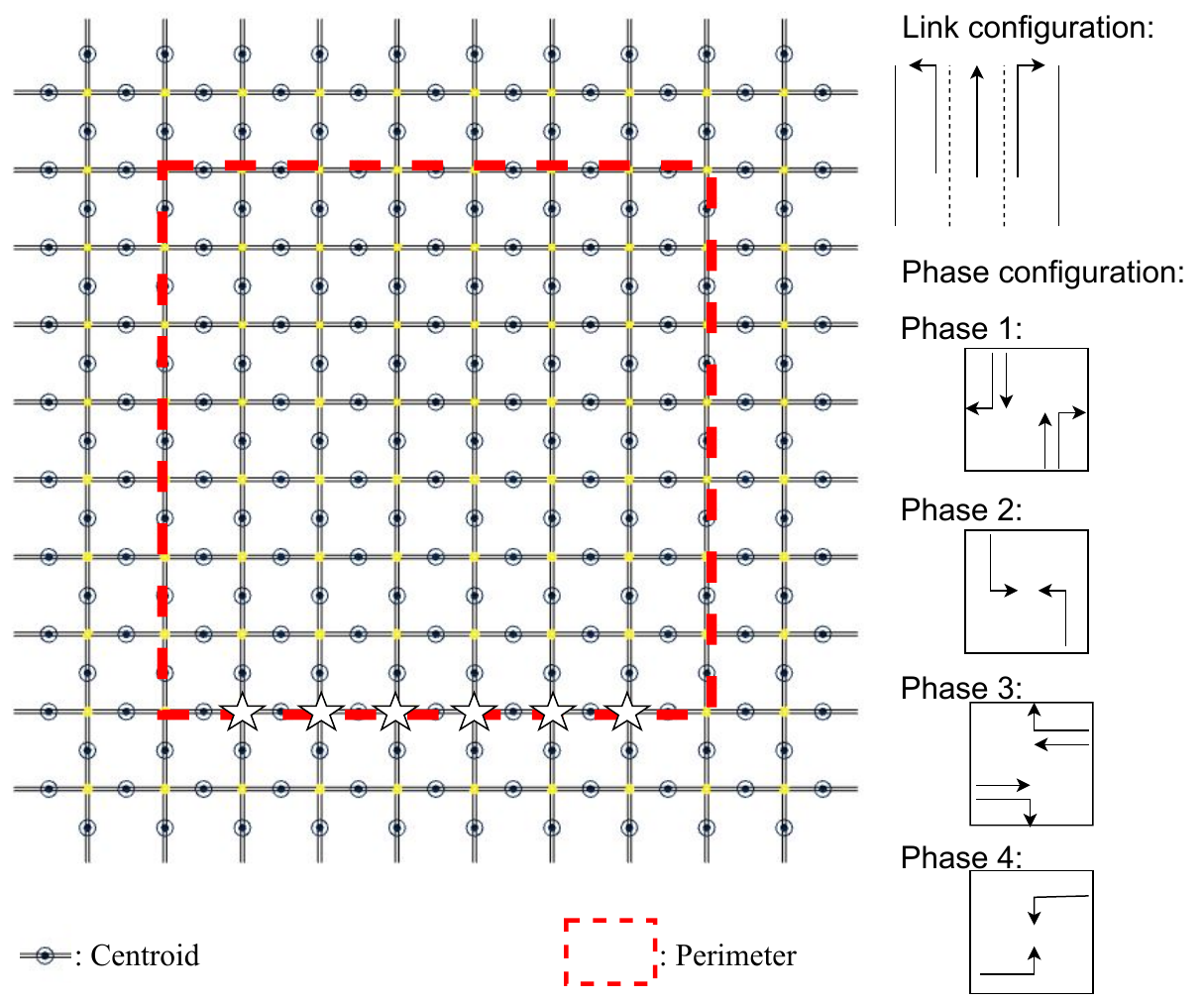}
	\caption{Network setup.}
	\label{fig:net_sim}
\end{figure}

The typical four-phase plan shown in Figure \ref{fig:net_sim} is employed at all intersections. Q-MP updates signal phases every 10 s at all intersections. The yellow time and all-red time are 3 s and 1 s, respectively. 

The simulation time is 2 hours. The locations of centroids that both generate (origin) and attract (destination) vehicle trips are shown in Figure \ref{fig:net_sim}. Two demand patterns are tested: 

\begin{enumerate}
	\item Uniform demand pattern: The expected number of trips between each OD pair in this pattern is identical. It is used to evaluate the improvement in traffic mobility of Q-MP\textsuperscript{H}. To obtain the typical congestion formation-dissipation process, the demand in the first hour is relatively high while the second hour with zero demand serves as a cool down period. After testing multiple demand levels, an hourly rate of 1.05 vph is used as the vehicle generation rate for all ODs. This demand level is high enough to lead to congestion to the network during simulation, which is the condition that the proposed strategy and perimeter control are designed for; however, it is low enough so that all trips can be finished within the simulation time for most scenarios.
	\item Concentrated demand pattern: In this scenario, travel demand is still generated uniformly throughout the entire network, but all vehicles have destinations within a designated central area. This pattern differs from the uniform demand pattern by removing all trips heading to the outer region. Since perimeter control is designed to mitigate congestion caused by regional high demand, this pattern effectively represents such a scenario and has been widely used in perimeter control studies \citep{ni2019cordon, liu2024nmp}. Additionally, simulation results indicate that Bang-Bang control performs worse than pure Q-MP control under the uniform demand pattern. Therefore, this pattern is only utilized in Section \ref{sec:comparison_perimeter} for a fair comparison between Q-MP\textsuperscript{H} and the perimeter control methods.
\end{enumerate}

Five replications with different random starting seeds are run for each scenario to mitigate the influence of randomness. Stochastic c-logit route choice model is used in AIMSUN to emulate user-equilibrium routing conditions in which vehicles make routing decisions to minimize their own personal travel times.

Simulation data is retrieved every second. Then, data is averaged every 100 s to obtain average density (veh/lane-km) and hourly trip completion rate (vph), also called network exit function (NEF), respectively, which are used to represent MFD. 

\subsection{Base control: Q-MP}

Thanks to its ease of implementation and strong operational performance, MP has become a very popular decentralized traffic signal control algorithm \citep{wongpiromsarn2012distributed, varaiya2013max, kouvelas2014maximum, xiao2014pressure, wu2017delay,li2019position, dixit2020simple, wang2022learning,  mercader2020max, noaeen2021real, liu2022novel, liu2023total, smith2024backpressure, liu2024max, ahmed2024occ} in the past decade. A comprehensive review of MP algorithms can be found in \citep{levin2023max}.

The Q-MP (queue-based MP) algorithm proposed by \citet{varaiya2013max} is utilized as the base control for all intersections in the network. To make it easier to follow the proposed method, Q-MP is briefly explained. In Q-MP, the temporal dimension is divided into intervals with fixed duration for signal timing update. At the beginning of each interval, the weight of all movements is calculated as follows: 
\begin{equation}\label{eq:weight_qmp}
	w_{i,j,k}(t)=x_{i,j,k}(t)-\sum_{l\in {\pazocal{D}^k}}x_{j,k,l}(t)R_{j,k,l}(t) \quad \forall (i,j,k)
\end{equation}

Then, the pressure of all phases is calculated by:
\begin{equation}\label{eq:pressure_qmp}
	p(\mathbf{S}^j)(t)=\sum_{(i,j,k)\in \pazocal{M}^{j}}w_{i,j,k}(t)C_{i,j,k}(t)S^j_{i,j,k}(t)\quad \forall \mathbf{S}^j\in \pazocal{S}^j
\end{equation}

At each intersection, the phase with the maximum pressure is activated for the next time step,
\begin{equation}\label{eq:opt}
	\mathbf{S}^{j*}(t)=\argmax_{\mathbf{S}^{j}\in \pazocal{S}^j}p(\mathbf{S}^{j})(t)
\end{equation}

When implementing Q-MP\textsuperscript{H}, at each signal update instant, a subset of CAVs are held and a subset of held CAVs re-enter the network according to the proposed criteria. Each intersection is controlled by Q-MP in which the total number of vehicles on each movement, $x_{i,j,k}(t)$ in Eq. \eqref{eq:weight_qmp}, is replaced with the total number of present vehicles on each movement including both HDVs and present CAVs, i.e., $x_{i,j,k}^{\text{CAV,P}}(t)+x_{i,j,k}^{\text{HDV}}$.

\subsection{Benchmark perimeter control}\label{sec: perimeter base}
Traditional perimeter control algorithms require the identification of a perimeter to limit inflows traversing the perimeter to the congested region. An example of perimeter is shown in Figure \ref{fig:net_sim}. Once the perimeter is identified, a general way to limit inflows is to reduce the green time ratio for phases sending vehicles into the congested region at perimeter intersections, i.e., intersections on the perimeter. Two perimeter control approaches, Bang-Bang control \citep{daganzo2007urban} and N-MP \citep{liu2024nmp}, are selected as the benchmark perimeter control methods. Both approaches will be explained in more details in the rest of this section. Note that \citet{liu2024nmp} has showed that the feedback-based perimeter control \citep{keyvan2012exploiting} generated worse performance than Bang-Bang control under similar settings. Therefore, the former is excluded from this paper.

Under the uniform demand pattern defined in Section \ref{sec: net setup}, Bang-Bang control results in a worse performance than the baseline pure Q-MP control strategy. Therefore, to ensure a fair comparison, the concentrated demand pattern, which has been utilized to demonstrate the performance of perimeter control methods \citep{ni2019cordon}, is employed. Note unlike Bang-Bang control, the performance of the proposed holding control strategy does not rely on this specific demand pattern. Therefore, this modification is only applied in Section \ref{sec:comparison_perimeter}, where to ensure a fair comparison between Q-MP\textsuperscript{H} and the two tested perimeter control approaches.   

\subsubsection{Bang-Bang control}

Bang-Bang control \citep{daganzo2007urban} is a simple but effective perimeter control strategy, which maximizes the operational efficiency for abstract systems. For this benchmark control strategy, all intersections except for perimeter intersections are still controlled by Q-MP, and Bang-Bang control is incorporated into the Q-MP structure at perimeter intersections only using the following method. At each signal update instant, if the density inside the perimeter, $\rho^p(t)$, exceeds the critical density, $\rho^p_{cr}$, all inbound lanes at the perimeter intersections will be blocked by adding red time meters, and the signal at perimeter intersections is updated based on Q-MP without considering those inbound lanes. For example, when $\rho^p(t)>\rho^p_{cr}$, all middle lanes of the northbound approaches at the south-side perimeter intersections (marked by the star symbol in Figure \ref{fig:net_sim}) will be blocked and ignored by the weight calculation in Q-MP. If $\rho^p(t)<\rho^p_{cr}$, perimeter intersections are controlled by the original Q-MP.

\subsubsection{N-MP}
\citet{liu2024nmp} recently proposed an algorithm, called N-MP, to incorporate regional perimeter control mechanism directly into MP control structure, while maintaining its decentralized control structure and the maximum stability property. The major difference between N-MP and traditional MP algorithms is that N-MP considers both local and regional traffic states for the weight calculation for the inbound movements at perimeter intersections. When congestion occurs, the weight for the inbound movements decreases with the increase in the density inside the perimeter, $\rho^p$, and the increase in the upstream queue length of the local movement. Specifically, the weight for an inbound movement in N-MP is expressed as,

\begin{equation}\label{eq:nmp}
	w^N_{i,j,k}(t)=w_{i,j,k}(t)-\Psi\left(\rho^p(t)-\rho^p_{cr}, \;x_{i,j,k}(t)\right),
\end{equation}
where $w_{i,j,k}(t)$ is the weight of Q-MP, defined in Eq. \ref{eq:weight_qmp}, and $\Psi$ is a positive and increasing function of the number of vehicles on the corresponding upstream link, $x_{i,j,k}(t)$, and the exceeded density, $\rho^p(t)-\rho^p_{cr}$, in the protected region. Note Eq. \eqref{eq:nmp} is applied on inbound movements only if $\rho^p(t)-\rho^p_{cr}>0$. After obtaining the weight for all movements, the pressure calculation and phase selection in N-MP are the same as Q-MP. Unlike Bang-Bang control, which generates long queue accumulation resulting from blocking all inbound movements, N-MP determines whether to block an inbound movement based on both local and regional traffic states. As a result, N-MP can offer more efficient control decisions for the regional operational performance. A transformed sigmoid function is utilized for $\Psi$ to maintain the maximum stability property. Specifically,
\begin{equation}\label{eq:psi}
	\Psi\left(\rho^p(t)-\rho^p_{cr}, \;x_{i,j,k}(t)\right)=\xi (\rho^p(t)-\rho^p_{cr})^2\left(\frac{1}{1+\exp{\left(\frac{-x_{i,j,k}(t)}{400}\right)}}-\frac{1}{2}\right)\times10^3,
\end{equation}

\section{Simulation Results}\label{sec: results}
The performance of the Q-MP\textsuperscript{H} is evaluated in both fully CAV environments -- where all vehicles can be requested to hold (Sections \ref{sec: unrestricted parking}-\ref{sec: results restricted parking}) -- to assess the benefits of the proposed strategy under a complete CAV deployment, and in mixed HDV-CAV environments (Section \ref{sec: partial cav}) -- where only a subset of vehicles can be held -- to explore the achievable benefits during the transition era. In addition, to assess the feasibility of the proposed methods, various configurations of parking facilities—including number, distribution, and capacity—are tested in the simulation. Moreover, the effectiveness of holding long-distance trips is further evaluated under a more realistic setting characterized by irregular spatial parking distributions and limited parking capacity. This setup allows us to control for differences in the number of vehicles removed from the network, thereby isolating the effect of the holding rule itself. The network exit rate, a commonly used metric for evaluating regional traffic management performance, is used as the primary performance index. Furthermore, we examine the impact of the proposed method on vehicle delays for both held and non-held vehicles. The scenarios, in terms of the spatial distribution and capacity of parking facilities, in the rest of this section are summarized below:
\begin{itemize}
    \item Section \ref{sec: unrestricted parking}: parking available to all links with infinite parking capacity
    \item Section \ref{sec: results restricted parking}: restricted parking space
    \begin{itemize}
        \item Section \ref{sec: perimeter parking dist}: perimeter-like parking distribution with infinite parking capacity
        \item Section \ref{sec: results with parking limit}: perimeter-like parking distribution with limited parking capacity
        \item Section \ref{sec:comparison_perimeter}: perimeter-like parking distribution with infinite parking capacity
        \item Section \ref{sec:randomparking}: random parking distribution with infinite parking capacity
        \item Section \ref{sec:randomparking_limit}: random parking distribution with limited parking capacity
    \end{itemize}
    \item Section \ref{sec: partial cav}: random parking distribution with infinite parking capacity under varying CAV penetration rates
\end{itemize}

\subsection{Results with unrestricted parking space}\label{sec: unrestricted parking}
This section demonstrates the performance of the proposed strategy assuming that parking spaces are available next to all links, as shown in Figure \ref{fig:net_sim}, with enough capacity so that CAVs can be held when and where needed. This scenario is employed to demonstrate the efficiency of the proposed strategy under an idealized setting. Note that CAVs will be held at the centroid from the same street on which they are currently traveling.

\subsubsection{Comparison to pure Q-MP control}
This section compares Q-MP\textsuperscript{H} to Q-MP without perimeter control. According to Eq. \eqref{eq:nhold}, Q-MP\textsuperscript{H} uses the critical density in MFD, $\rho_{cr}$, to determine whether the holding strategy should be executed. Therefore, we first run simulation in which all intersections are controlled under Q-MP to obtain $\rho_{cr}$. This also serves as the baseline to demonstrate the control performance of the proposed algorithm. 

The MFD for Q-MP control, represented by the relationship between average density and trip completion rate, is shown in Figure \ref{fig:nef density}. Under the uniform demand patterned described in the previous section, the trip completion rate gradually increases from zero to the maximum. Then, after passing the first local maximum point, it starts decreasing, meaning the network enters the congested domain. 

\begin{figure}[h]
	\centering
	\includegraphics[width=0.6\textwidth]{./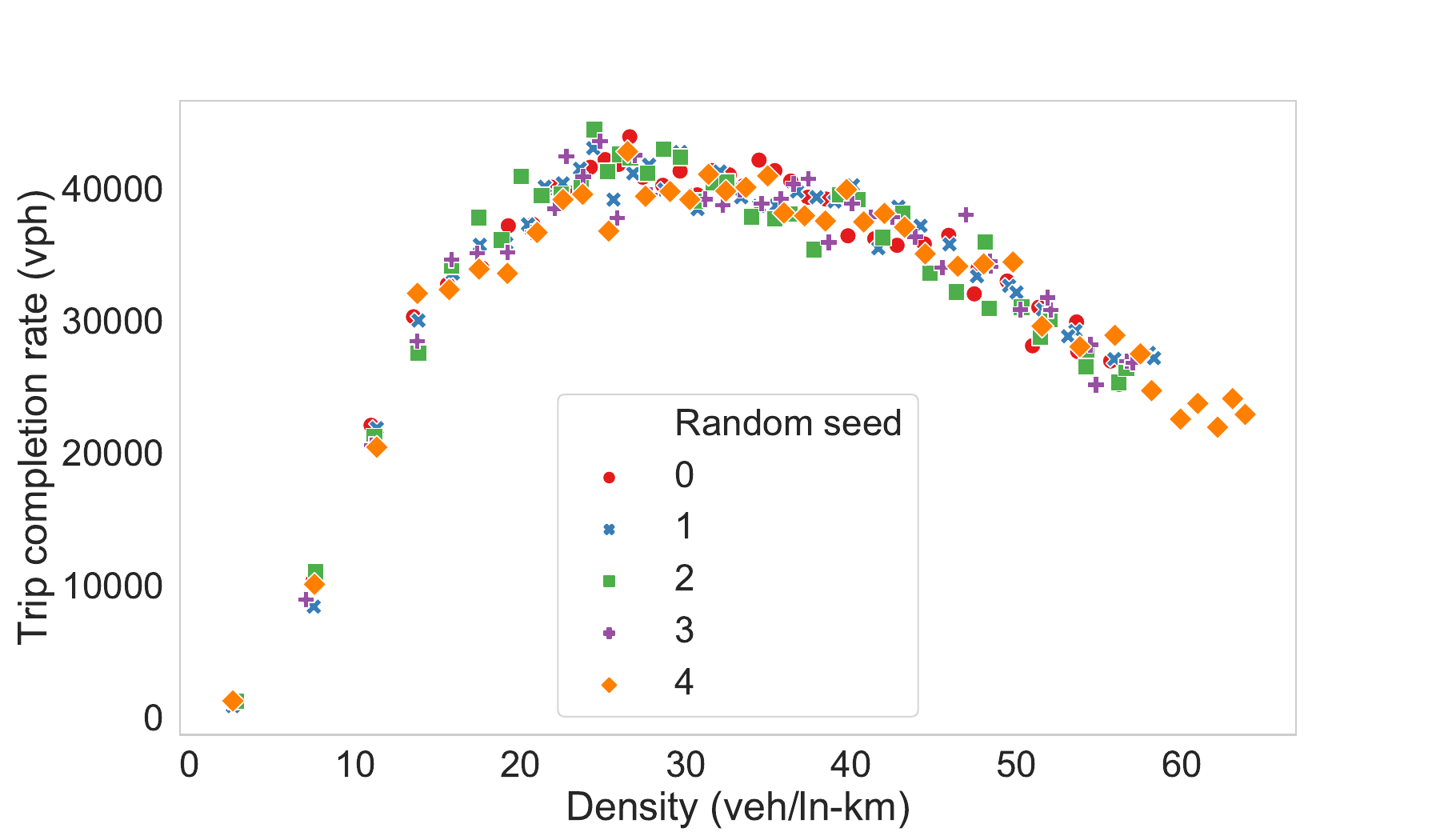}
	\caption{MFD under Q-MP.}
	\label{fig:nef density}
\end{figure}

Figure \ref{fig:nef density} shows that the trip completion rate reaches its maximum when the average density is in the range between 15 veh/lane-km and 40 veh/lane-km. Next, we identify a value from the set of \{15, 20, 25, 30, 35, 40\} that can achieve the highest efficiency for Q-MP\textsuperscript{H} as the critical value, $\rho_{cr}$. For this purpose, we hold all CAVs with a remaining distance of 10 links, i.e., $\phi=10$, once the average density exceeds the tested value. The held vehicles will re-enter the network after 5 minutes. Note that while a CAV can ultimately be held only once in the proposed strategy, the purpose of this section is to determine the critical density that yields optimal control performance. Therefore, in these simulations, CAVs may be held multiple times, meaning their total holding time is not restricted. The average MFDs across five random seeds for all tested critical density are shown in Figure \ref{fig:critical density}. For a better visualization, the vertical axis shows the increase in the cumulative number of exit vehicles from Q-MP\textsuperscript{H} compared to Q-MP. First, it shows that Q-MP\textsuperscript{H} can significantly improve the operational efficiency compared to the pure Q-MP control. Second, with the increase in $\rho_{cr}$, the vehicle exit rate first increases and then decreases, and the best performance is achieved when $\rho_{cr}=20$ veh/lane-km. This value is used for the simulation in the rest of this section.

\begin{figure}[h]
	\centering
	\includegraphics[width=0.6\textwidth]{./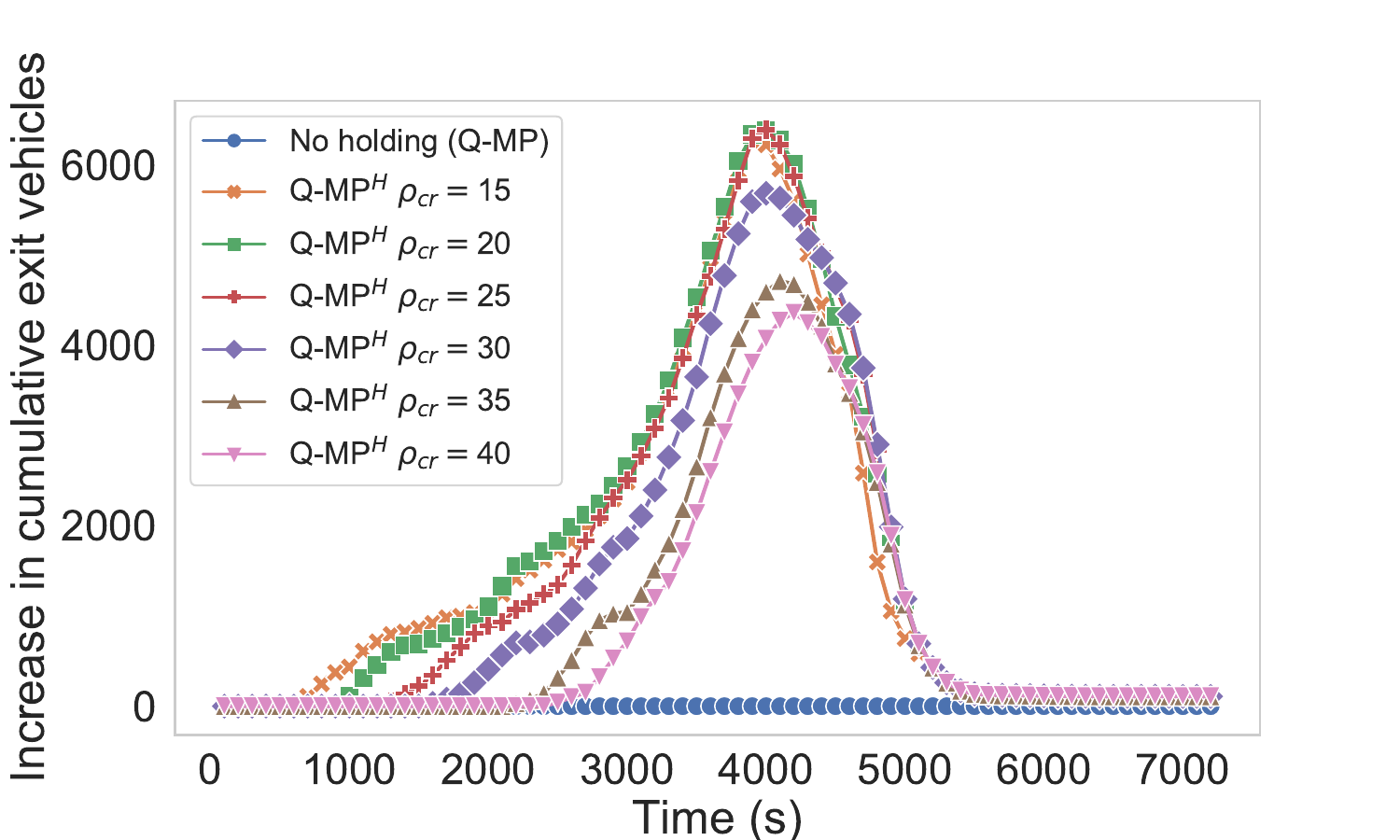}
	\caption{Impact of critical density on network exit rate.}
	\label{fig:critical density}
\end{figure}

In addition to MFD which represents the network-level efficiency, we divide all simulated vehicles into groups based on their total travel distance and examine the change in travel delay for each vehicle group to obtain a higher resolution for the
control performance. Figures \ref{fig:number of trips} and \ref{fig:delay reduction in each group} show the number of vehicles and the total reduction of travel delay in each group, respectively. Under the assumed simulation settings, the travel distance of most trips is shorter than 20 links, and the travel distance of 7 links owns the largest number of vehicle trips. Figure \ref{fig:delay reduction in each group} shows that the proposed holding strategy can reduce travel delay for all vehicle groups including, perhaps surprisingly, the group from which CAVs are temporarily held.

\begin{figure}[!ht]
	\centering
	\begin{subfigure}[h]{0.49\textwidth}
		\centering
		\includegraphics[width=\textwidth]{./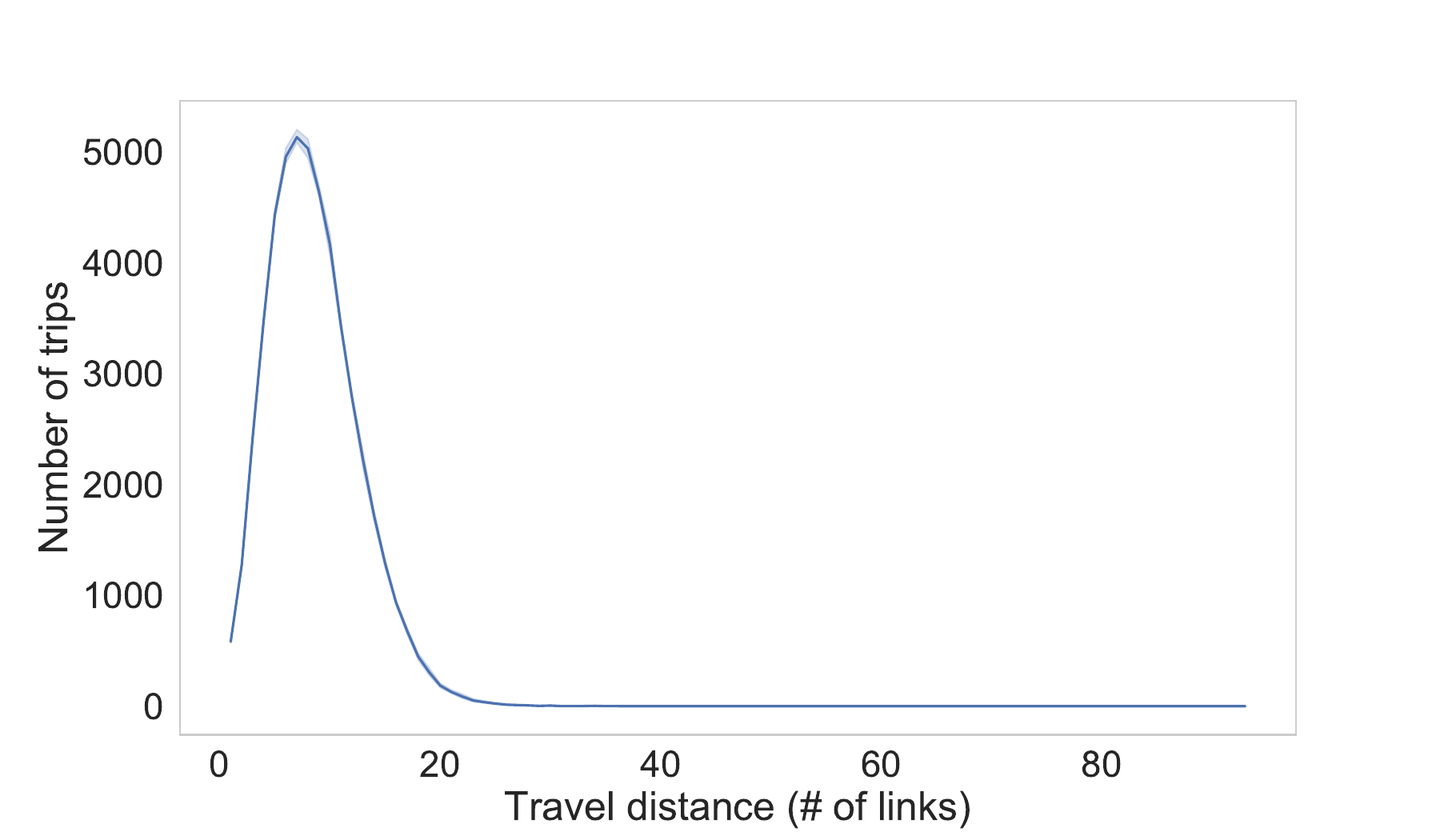}
		\caption{Number of trips in each group}
		\label{fig:number of trips}
	\end{subfigure}
	\begin{subfigure}[h]{0.49\textwidth}
		\centering
		\includegraphics[width=\textwidth]{./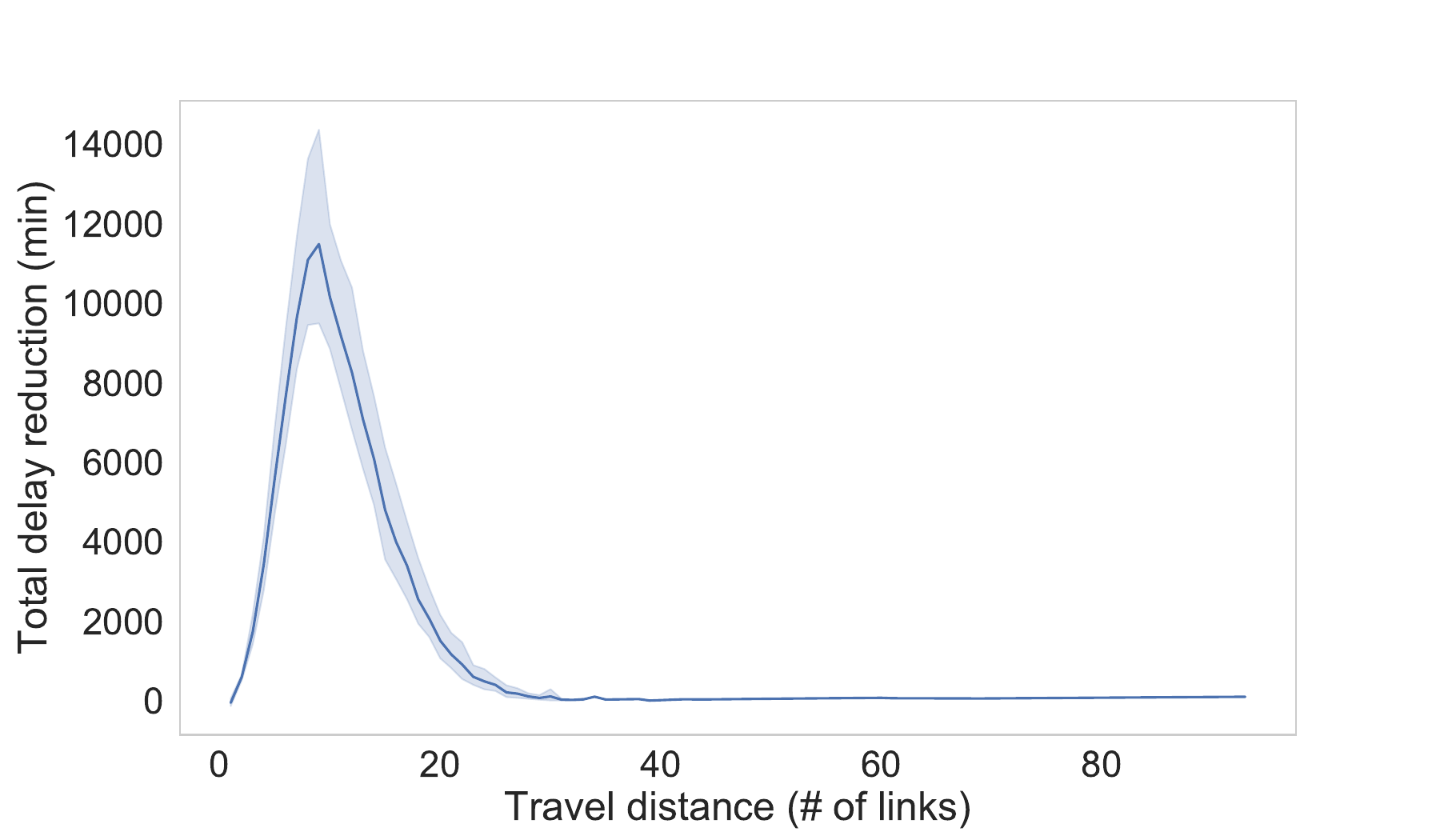}
		\caption{Delay reduction in each group}
		\label{fig:delay reduction in each group}
	\end{subfigure}
	\hfill
	\caption{Performance for vehicle groups.}
	\label{fig:vehicle group}
\end{figure}

Since dynamic routing is allowed in the simulation, it is possible that a subset of vehicles switch to alternative routes with shorter travel distance, if the travel time on such routes becomes shorter than the original routes in the pure Q-MP-based control scenario after implementing the proposed holding strategy. To ensure the improvement in the overall efficiency does not result from a reduction in travel distance, Figure \ref{fig:travel dis compare} shows the comparison of total travel distance between Q-MP and Q-MP\textsuperscript{H}. Although the total travel distance from the proposed algorithm is slightly lower than that from Q-MP under random seeds 1 and 4, the difference is negligible compared to the difference in travel delay, shown in Figure \ref{fig:delay reduction in each group}. Therefore, the reduction in vehicle travel time results from the improvement of traffic conditions rather than the shortening of travel distance.
\begin{figure}[h]
	\centering
	\includegraphics[width=0.5\textwidth]{./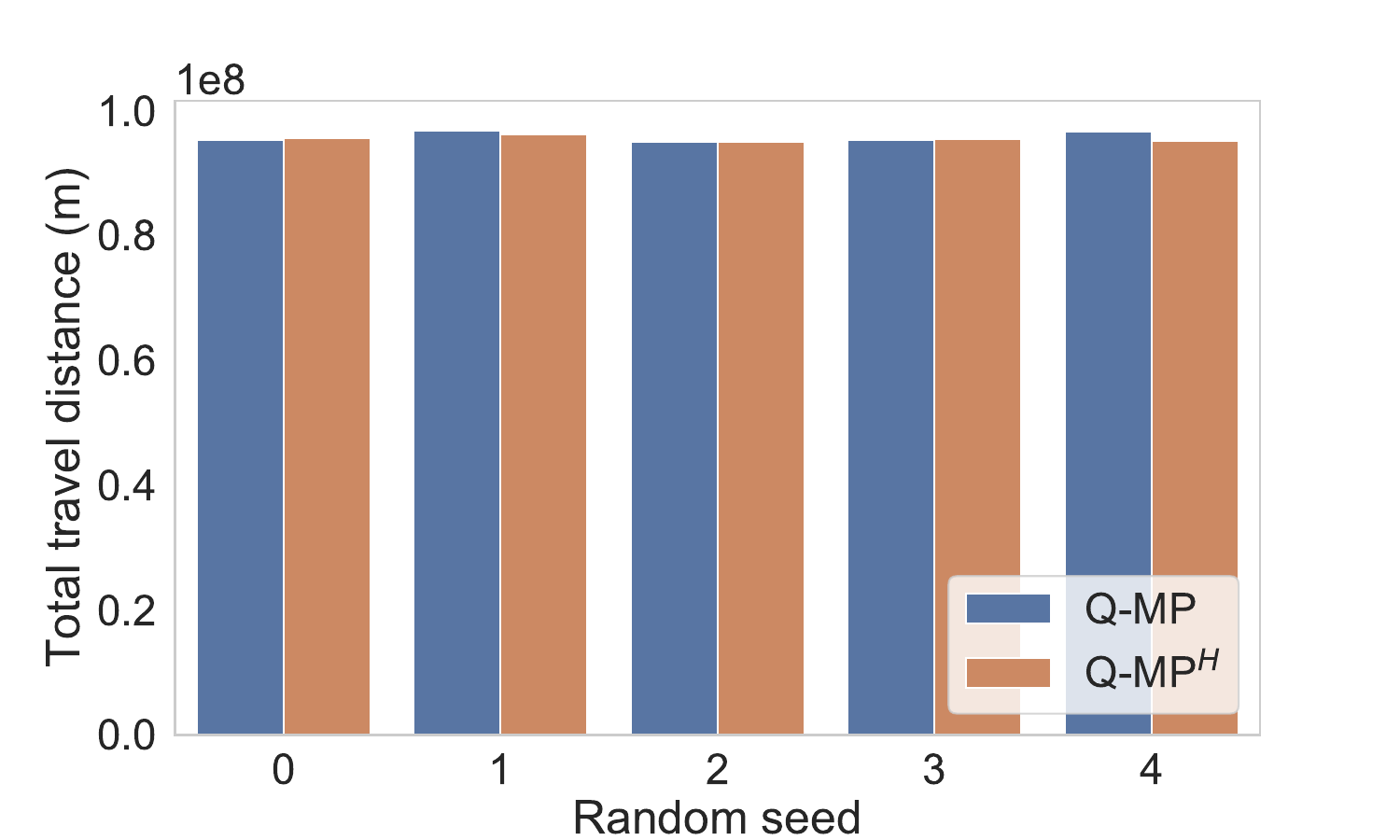}
	\caption{Comparison of travel distance.}
	\label{fig:travel dis compare}
\end{figure}

This section demonstrated the potential of the proposed strategy to improve the operational efficiency at a network level. In the following, we will investigate the impact of the thresholds for remaining travel distance $\phi$ and holding time $\tau$ on the performance and delve deeper into the travel delays associated with travel distance.

\subsubsection{Impact of holding parameters}\label{sec:phi and tau}
We first investigate the influence of $\phi$ on the control performance. Assuming the holding time is set to be 5 mins, five values in the set \{6, 8, 10, 12, 14\} (links) are tested for $\phi$. Same as the previous section, a CAV can be held multiple times. Figure \ref{fig:impact of td} shows the results on both cumulative number of exit vehicles and vehicle delay. Again, for the purpose of visualization, the results from Q-MP are used as baseline in Figure \ref{fig:impact of td}. When $\phi$ is small, an unnecessarily large number of CAVs are held, leading to a low improvement in the efficiency. On the other hand, when $\phi$ exceeds a certain value, the number of held vehicles is not large enough to produce the best traffic conditions, leading to a reduction in the control performance. According to Figure \ref{fig:impact of td}, $\phi=8$ leads to the best control performance. In addition, the marginal improvement in the control performance when $\phi$ increases from 6 to 8 is much more significant than the marginal decline when $\phi$ exceeds 8. 

\begin{figure}[!ht]
	\centering
	\begin{subfigure}[h]{0.49\textwidth}
		\centering
		\includegraphics[width=\textwidth]{./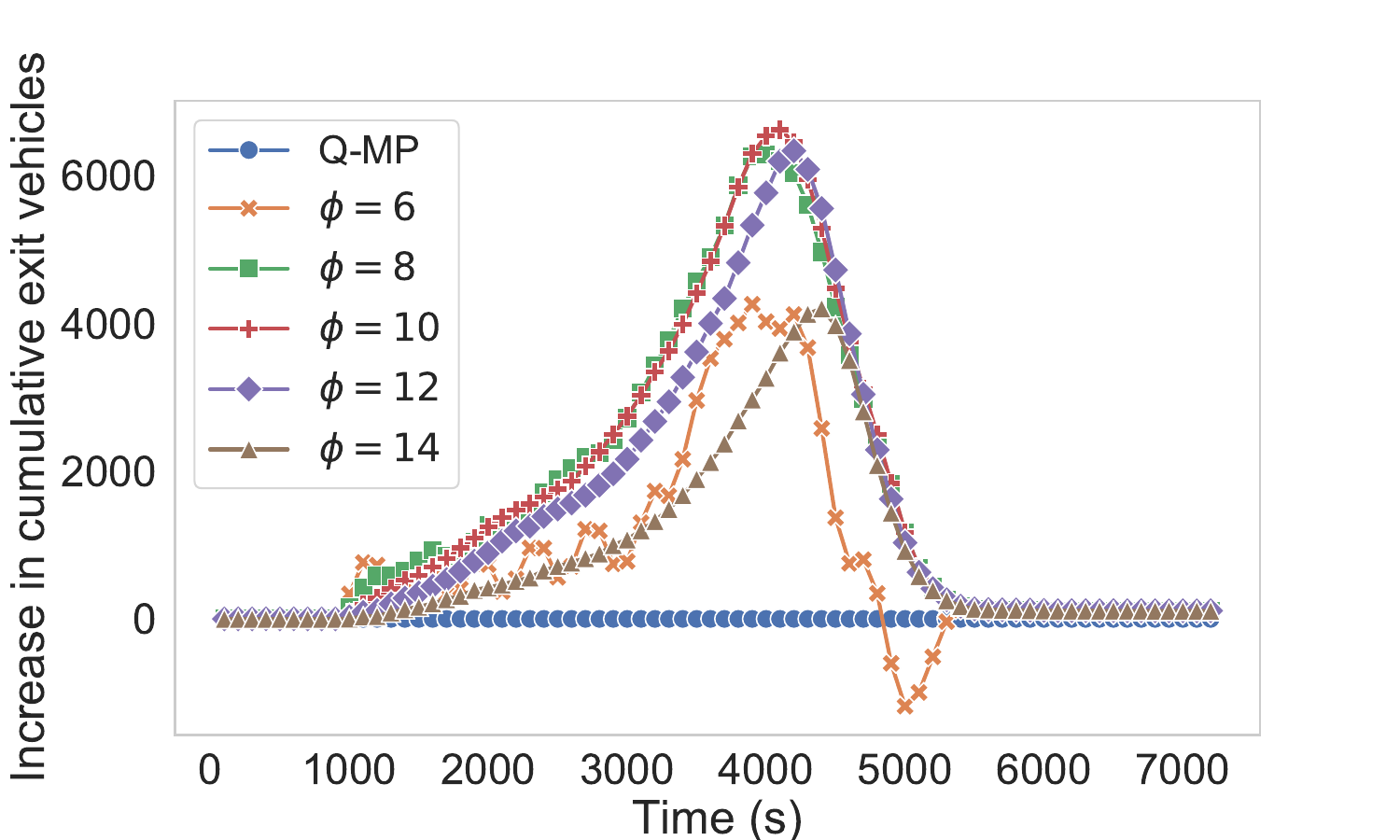}
		\caption{Comparison of exit vehicles}
		\label{fig:td on mfd}
	\end{subfigure}
	\begin{subfigure}[h]{0.49\textwidth}
		\centering
		\includegraphics[width=\textwidth]{./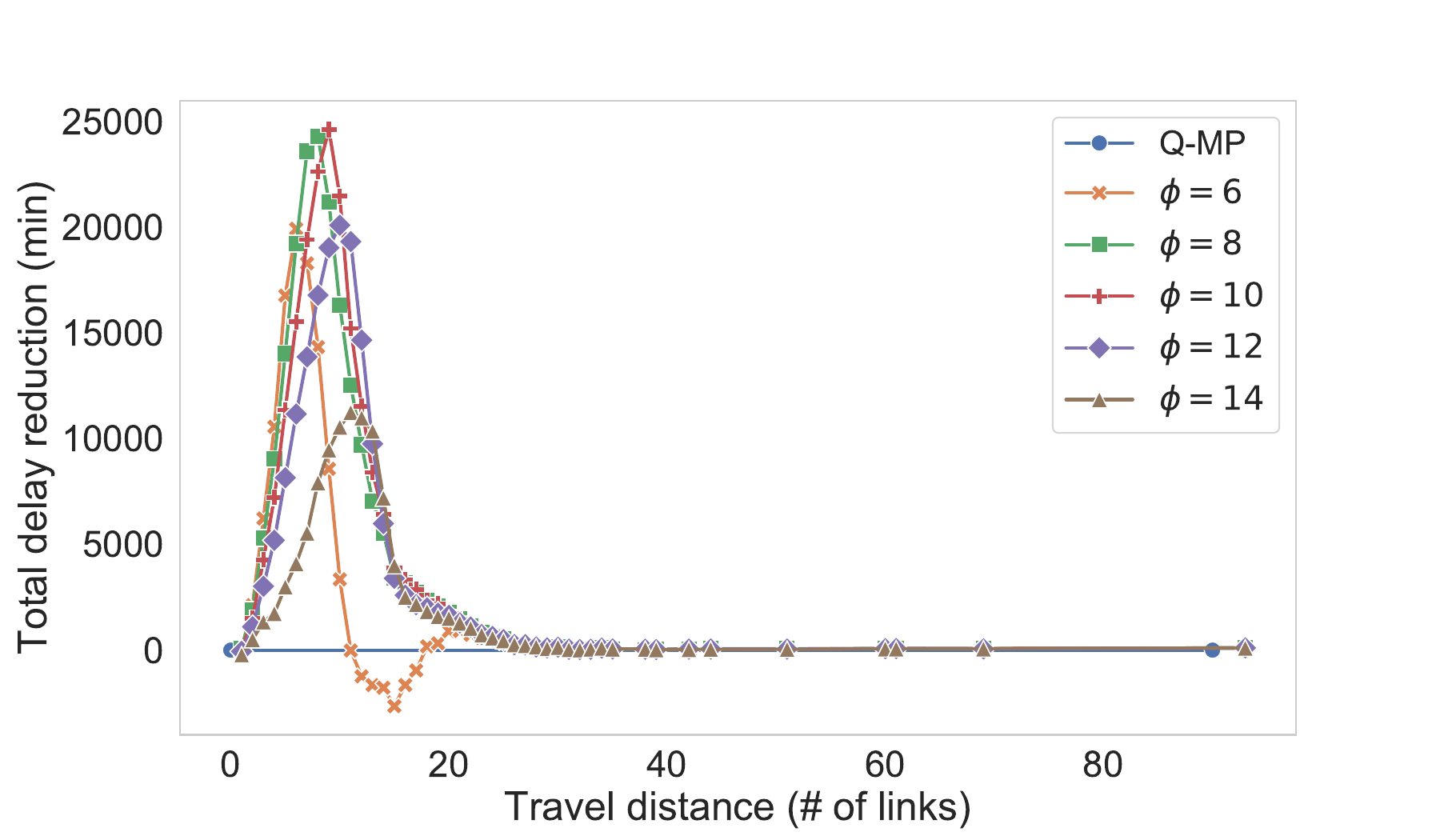}
		\caption{Vehicle delay}
		\label{fig:td on delay}
	\end{subfigure}
	\hfill
	\caption{Influence of $\phi$.}
	\label{fig:impact of td}
\end{figure}

The previous results do not impose a threshold on the holding time \footnote{The holding time is set to 5 mins for one-time holding. However, since a CAV can be held multiple times, the total holding time is not limited.}, which can lead to potential equity issues. To address this issue, we assume that all CAVs can be held at most once and vary the amount of time a vehicle is held, $\tau$. The results, shown in Figure \ref{fig:impact of tau}, reveal that for the tested range, a longer holding time leads to a higher operational efficiency. However, when $\tau\ge 8$ mins, the marginal improvement is less significant than $\tau< 8$ mins. Although $\tau= 10$ mins generates the best control performance, in practice, the selection for the value of $\tau$ should be determined based on the policy makers' evaluation on both the improvement for the system's efficiency and the sacrifice of individual's time from holding. To avoid holding vehicles for too long, $\tau=6$ mins is used, and all CAVs can be held at most once in the rest of this paper.

\begin{figure}[!ht]
	\centering
	\begin{subfigure}[h]{0.49\textwidth}
		\centering
		\includegraphics[width=\textwidth]{./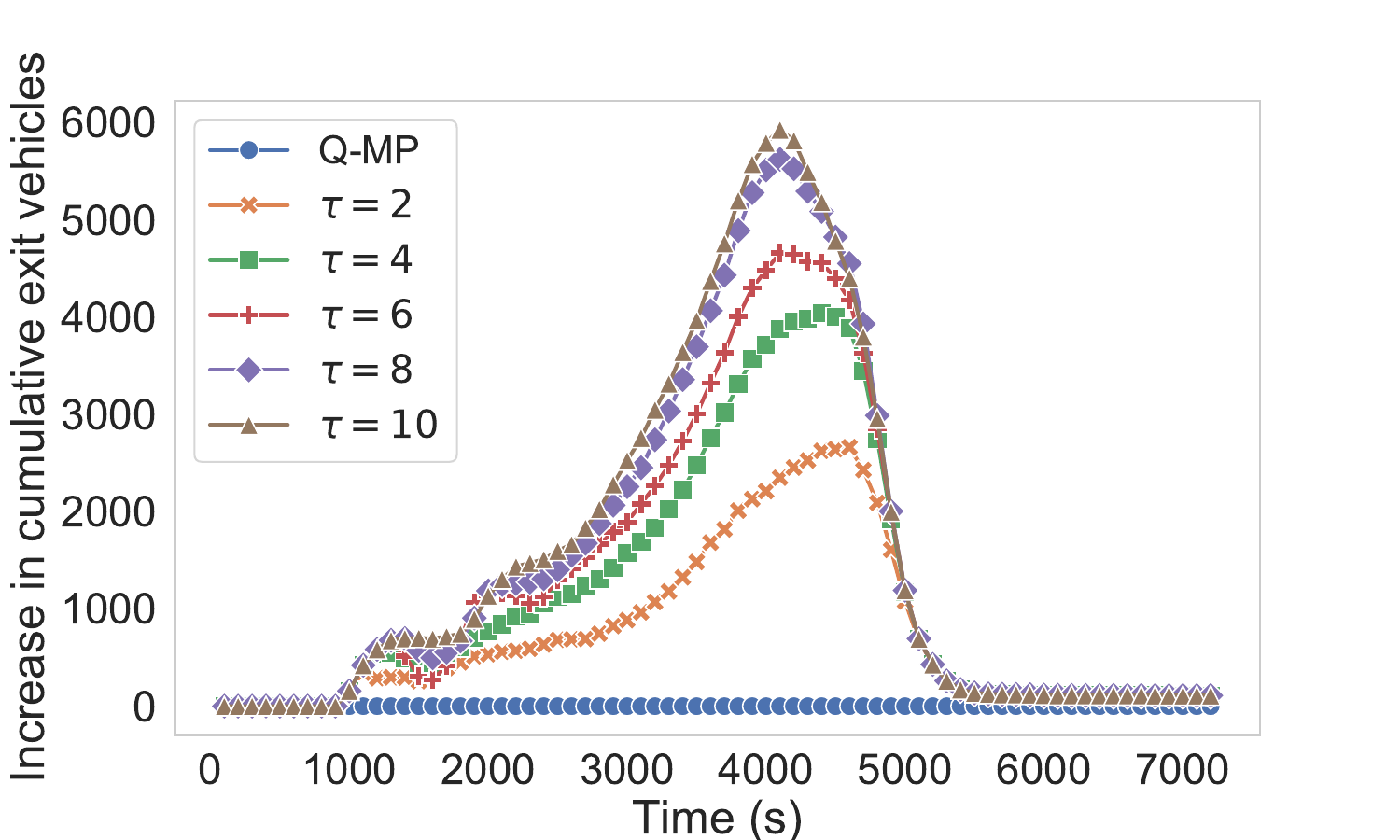}
		\caption{Comparison of exit vehicles}
		\label{fig:tau on mfd}
	\end{subfigure}
	\begin{subfigure}[h]{0.49\textwidth}
		\centering
		\includegraphics[width=\textwidth]{./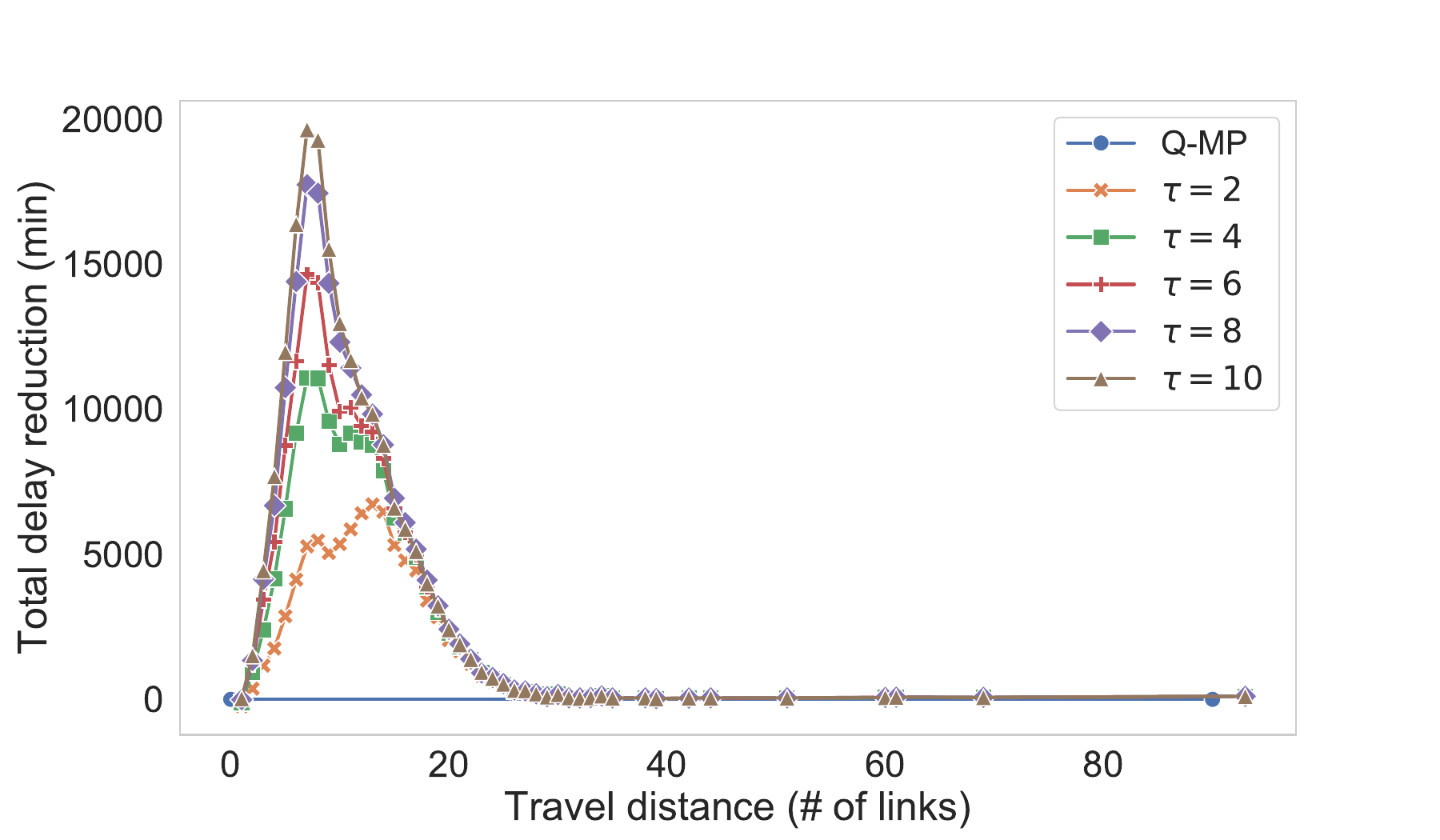}
		\caption{Vehicle delay}
		\label{fig:tau on delay}
	\end{subfigure}
	\hfill
	\caption{Influence of $\tau$ under $\phi=8$ links.}
	\label{fig:impact of tau}
\end{figure}

\subsubsection{Vehicle travel delay}
This section delves deeper into the effectiveness of the proposed algorithm on the reduction of travel delay for both held and non-held vehicles. For brevity, we only focus on the results for the combination of $\phi=8$ links and $\tau=6$ mins. However, the general trends and insights hold for other combinations of parameters.

Figure \ref{fig:total delay reduction both groups} shows the total delay reduction for both held and non-held vehicles associated with different travel distance. Note that even though the threshold for remaining travel distance is $\phi = 8$ links, meaning that vehicles with travel distances less than 8 links will never be held, there still exists held vehicleswith total travel distances less than 8 blocks in Figure \ref{fig:total delay reduction both groups}. This occurs because these travel distances indicated on the horizontal axis of Figure \ref{fig:total delay reduction both groups} represent the travel distance under the Q-MP (i.e., without holding); due to randomness and the dynamic routing used in the simulation, these vehicles might sometime take a route longer than $\phi$ blocks when holding is applied. As a result, they can be held and their travel delay will be increased, indicating by the negative delay reductions in Figure \ref{fig:total delay reduction both groups}. However, the percentage of these vehicles is very small as to be negligible; see Figure \ref{fig:percentage}.

\begin{figure}[!ht]
	\centering
	\includegraphics[width=0.6\textwidth]{./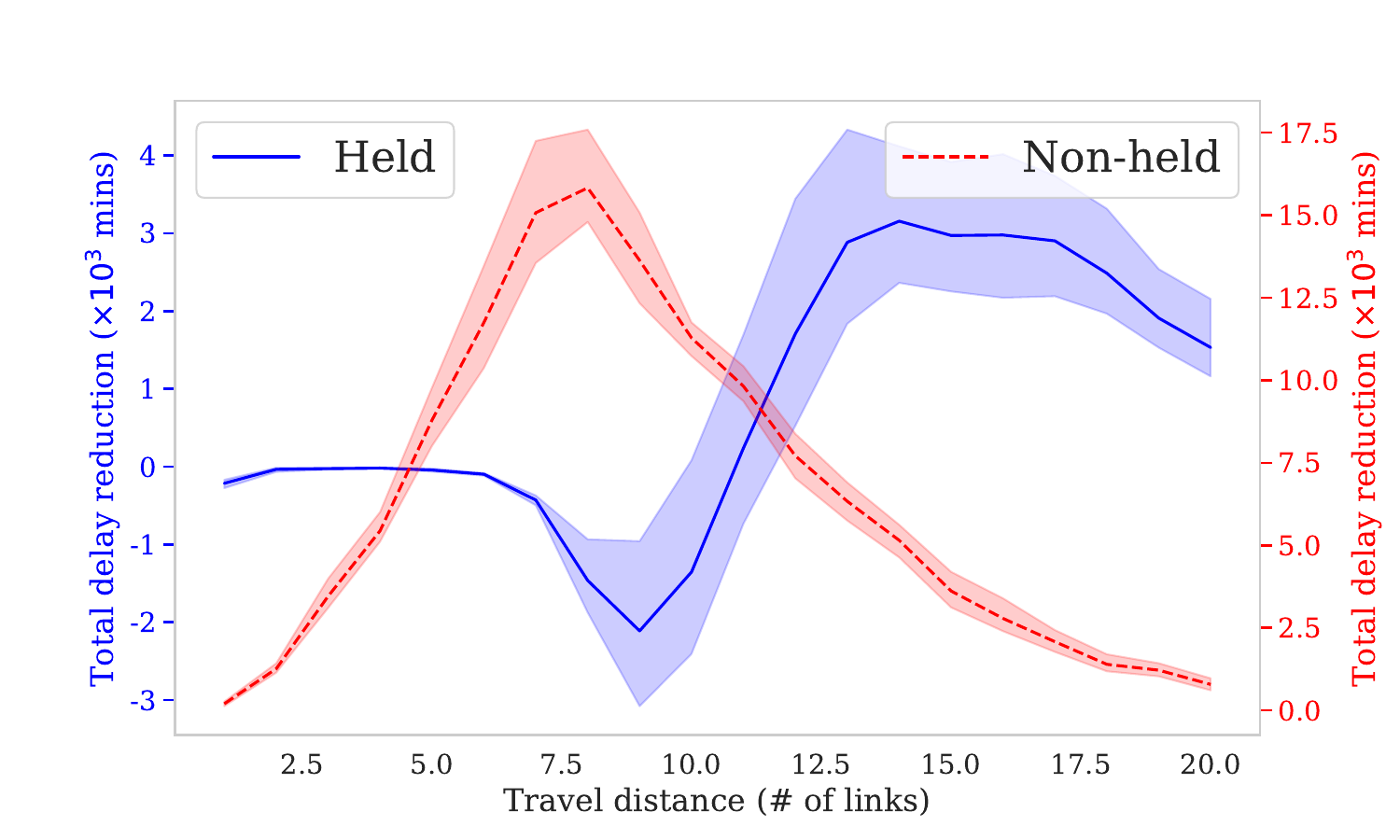}
	\caption{Total vehicle delay reduction.}
	\label{fig:total delay reduction both groups}
\end{figure}

\begin{figure}[h]
	\centering
	\includegraphics[width=0.6\textwidth]{./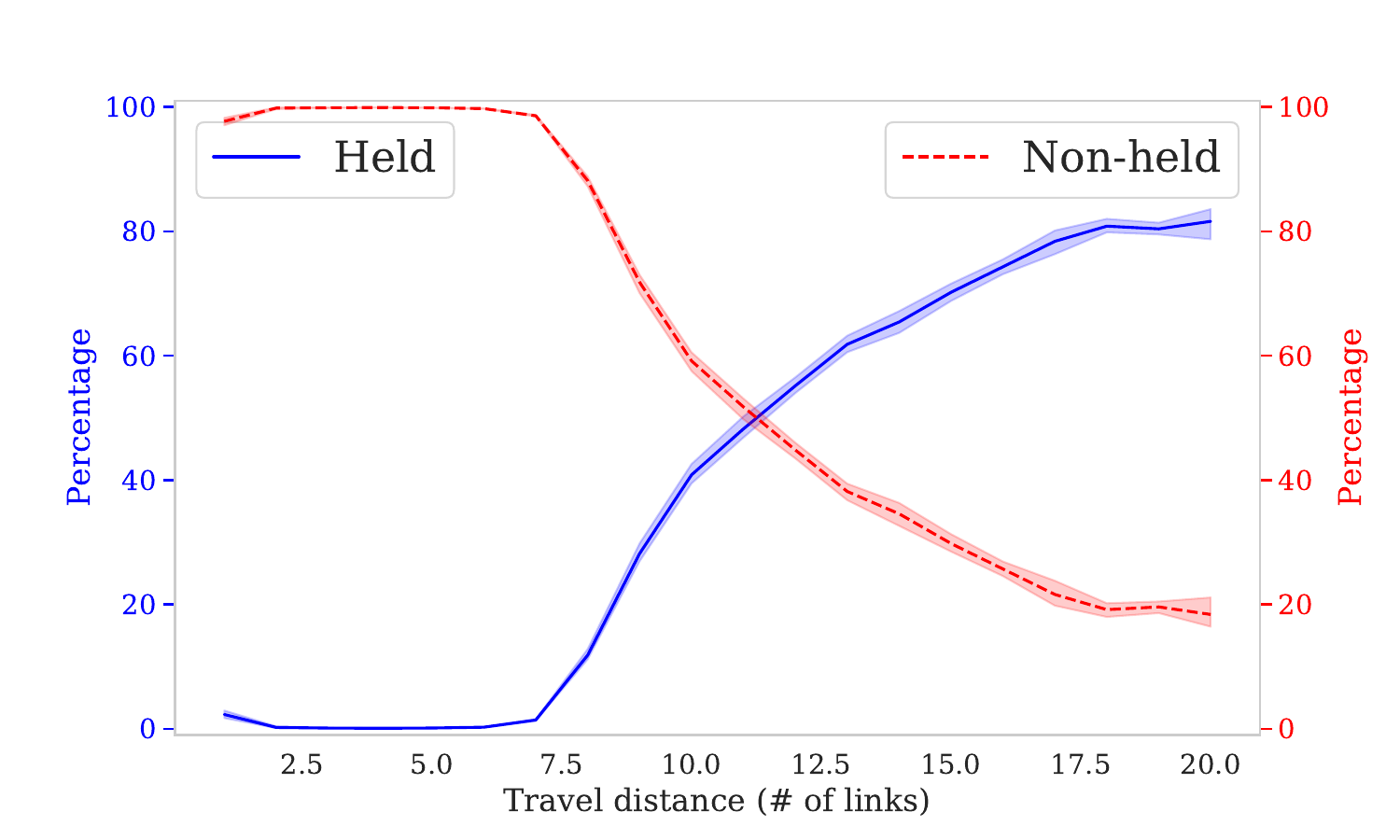}
	\caption{Percentage of vehicle groups.}
	\label{fig:percentage}
\end{figure}

In line with our expectation, travel delays are reduced for all vehicles that are not held; see the dashed red line in Figure \ref{fig:total delay reduction both groups}. Perhaps more importantly, however, we notice that although the travel delay of held-vehicles with a travel distance between 8-12 links increases, the travel delay for the held-vehicle groups with a longer travel distance actually decreases. This indicates that the holding strategy not only saves time for non-held vehicles, but it can also potentially benefit those CAVs that are held, leading to a reduction for total travel delay for all vehicle groups. This occurs because traffic conditions under the proposed strategy are improved compared to those that would arise without it, and longer trips tend to receive greater benefits from the reduced congestion. The observed delay reductions across these vehicle groups suggest that users in these groups can benefit from the long run, which represents a desirable property. This is an important feature of the proposed approach, as human drivers may also be more willing to participate if they recognize that temporary waiting can ultimately reduce their own travel times, thereby enhancing the practical implementation of the strategy. 

Figure \ref{fig:average delay reduction both groups} shows the average delay reduction per trip for both held and non-held vehicles. Interestingly, it indicates that for both held and non-held vehicles, the proposed control strategy brings greater advantages (or lower disadvantages for held vehicles that experience a delay increase) for vehicles with longer travel distances. It is reasonable to assume that, on average, the travel time reduction for a unit travel distance generated from the proposed strategy is identical for all vehicles. As a result, the total delay reduction increases with travel distance.  

\begin{figure}[h]
	\centering
	\includegraphics[width=0.6\textwidth]{./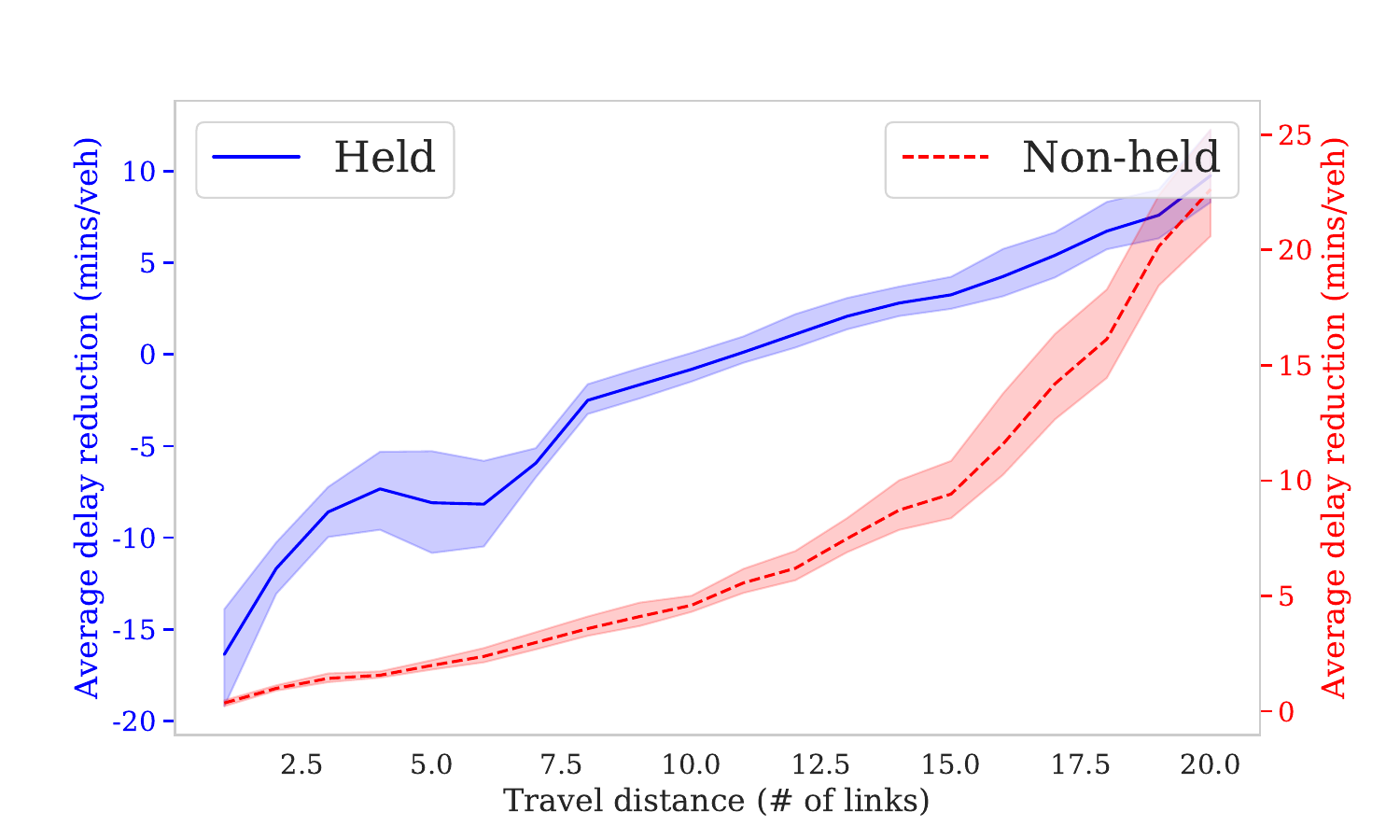}
	\caption{Average vehicle delay reduction.}
	\label{fig:average delay reduction both groups}
\end{figure}

Figure \ref{fig:percentage} shows the percentage of held and non-held vehicles associated with travel distance. Although Figure \ref{fig:average delay reduction both groups} shows that vehicles with longer travel distance can gain a larger average travel delay reduction, Figure \ref{fig:percentage} shows that the probability of holding a longer trip is higher. A vehicle will be held if two conditions are satisfied during its trip: 1) the average density exceeds the critical density; 2) the remaining travel distance is longer than $\phi$. The probability of both conditions being met increases for vehicles with longer travel distances, leading to a higher holding percentage. 

\subsubsection{Impact of delay from entering/exiting parking facilities}
Although parking facilities are associated with links such that CAVs do not need to cruise or reroute to search for parking, under congested conditions, additional time may still be required for vehicles to fully exit the network after initiating the parking maneuver. Therefore, we conduct an additional sensitivity analysis for the scenario with $\phi = 8$ links and $\tau = 6$ min, in which the enter/exit delay is modeled as a normally distributed random variable with mean $\mu_d$ ranging from 10 s to 60 s and a standard deviation of 5 s. The effective holding time is adjusted by subtracting this delay from the predefined holding duration $\tau = 6$ min. If a sampled delay is negative, it is set to zero; similarly, if the adjusted holding time becomes negative, it is truncated to zero. The impact of this delay on control performance is illustrated in Figure \ref{fig:impact of mu}, which shows that the strategy remains robust across the tested scenarios. As the enter/exit delay increases, the number of exiting vehicles and the total delay reduction decrease slightly, reflecting the reduced effective holding time. However, this impact appears to be negiglible on the overall operating performance.

\begin{figure}[!ht]
	\centering
	\begin{subfigure}[h]{0.49\textwidth}
		\centering
		\includegraphics[width=\textwidth]{./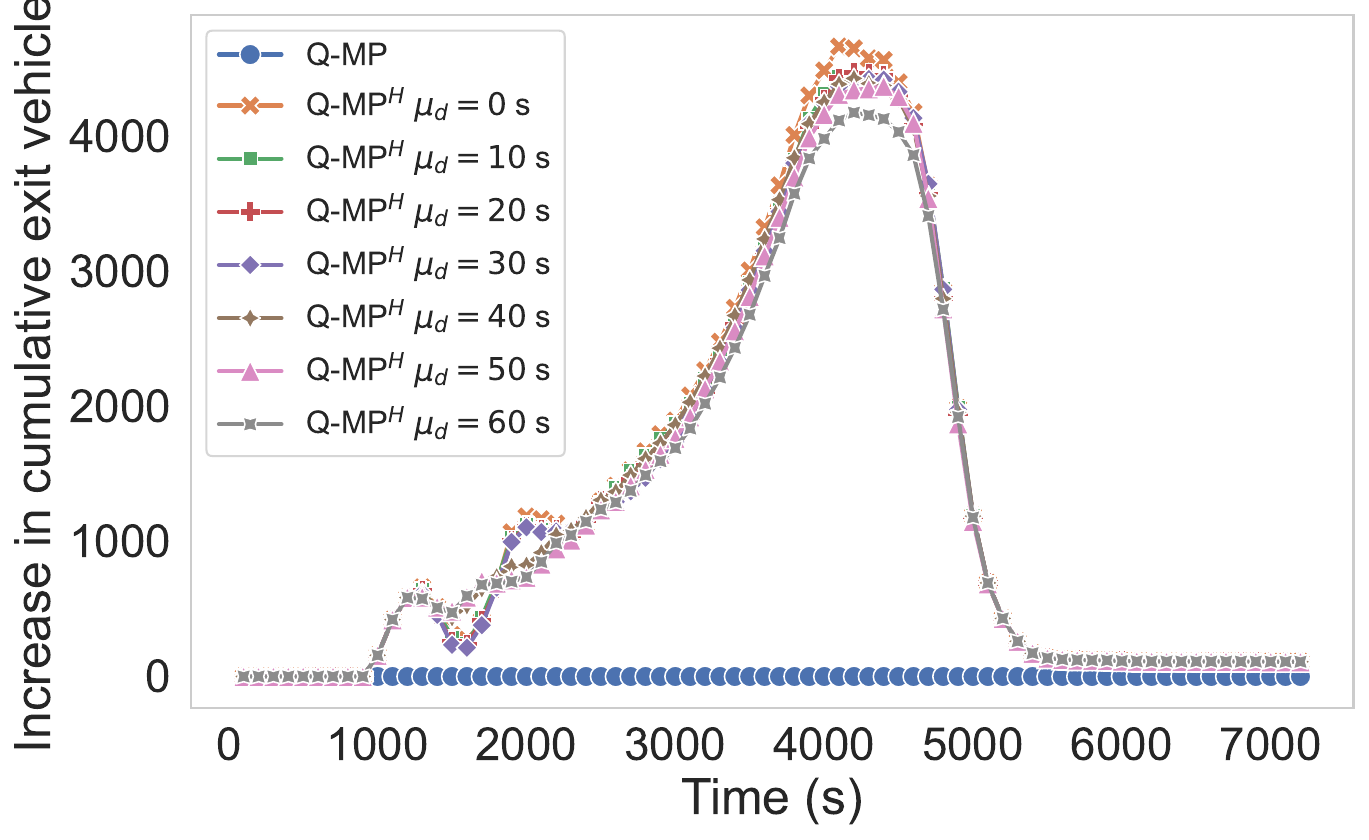}
		\caption{Comparison of exit vehicles}
		\label{fig:mu on mfd}
	\end{subfigure}
	\begin{subfigure}[h]{0.49\textwidth}
		\centering
		\includegraphics[width=\textwidth]{./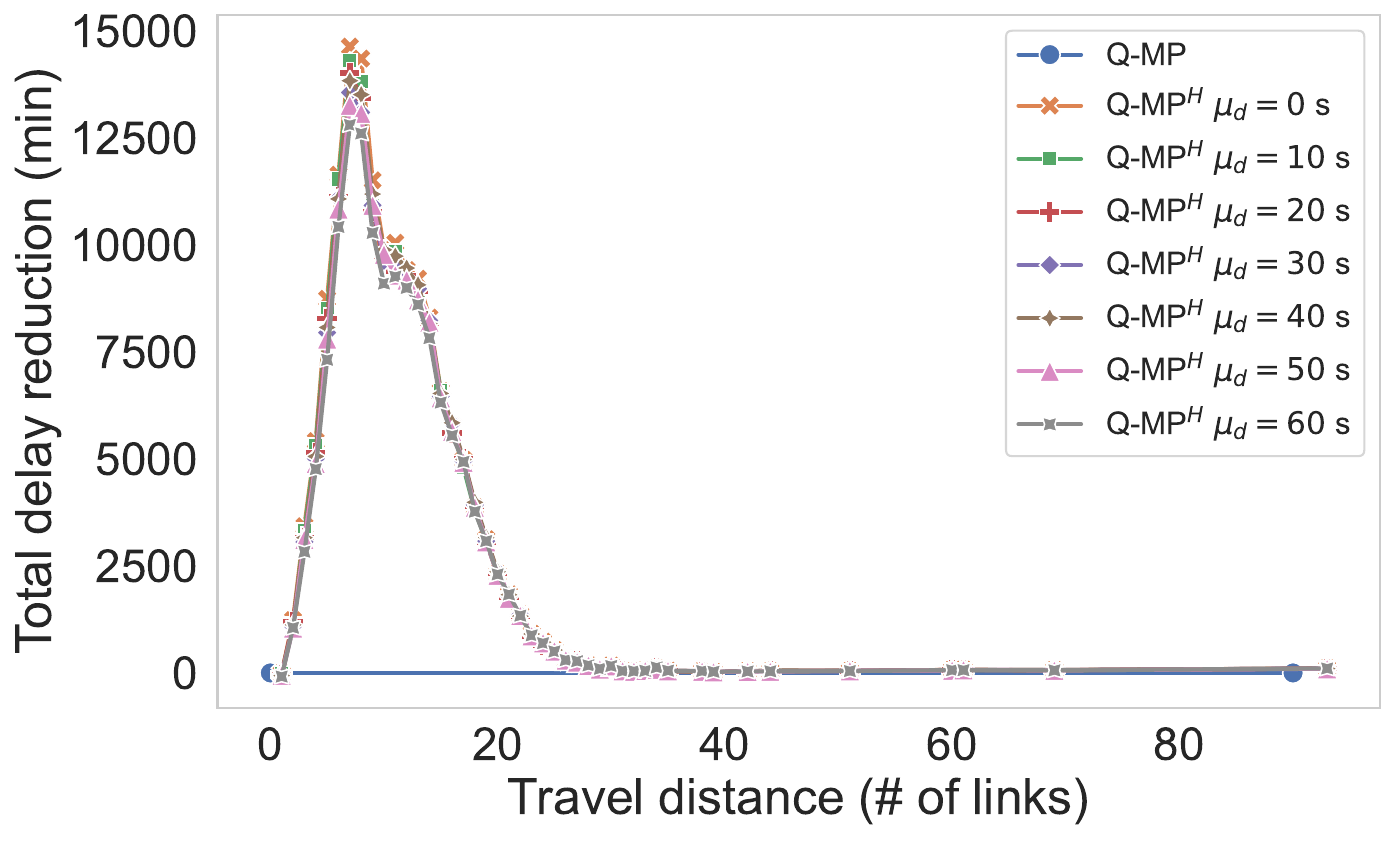}
		\caption{Vehicle delay}
		\label{fig:mu on delay}
	\end{subfigure}
	\hfill
	\caption{Influence of $\mu_d$ under $\phi=8$ links, $\tau=6$ mins.}
	\label{fig:impact of mu}
\end{figure}

\subsection{Results with restricted parking space}\label{sec: results restricted parking}
\subsubsection{Perimeter-like parking distribution with infinite parking capacity}\label{sec: perimeter parking dist}
The results in the previous section are obtained under the assumption that parking space is unrestricted, i.e., vehicles can leave the street immediately and wait at the parking location next to the street whenever needed. This assumption requires parking space to be available next to all streets, which is difficult to realize in practice. This section tests the control performance under the scenario in which parking spaces are restricted. We first consider that these parking spaces are located at similar locations to a perimeter boundary that would be implemented using perimeter metering control (i.e., a square boundary in this case). Later,  random parking space locations are considered that would be ``perimeter-less" (in Section \ref{sec:randomparking}). Only vehicles that are passing these locations when the holding condition is satisfied can be held. For the square boundary configuration, let index $i$ indicate the side length, in the unit of blocks, of the square. Five squares shown in Figure \ref{fig:parking} are selected. The purpose of this pattern is to make it comparable to the benchmark perimeter control algorithms shown in the next section.

\begin{figure}[!ht]
	\centering
	\includegraphics[width=0.5\textwidth]{./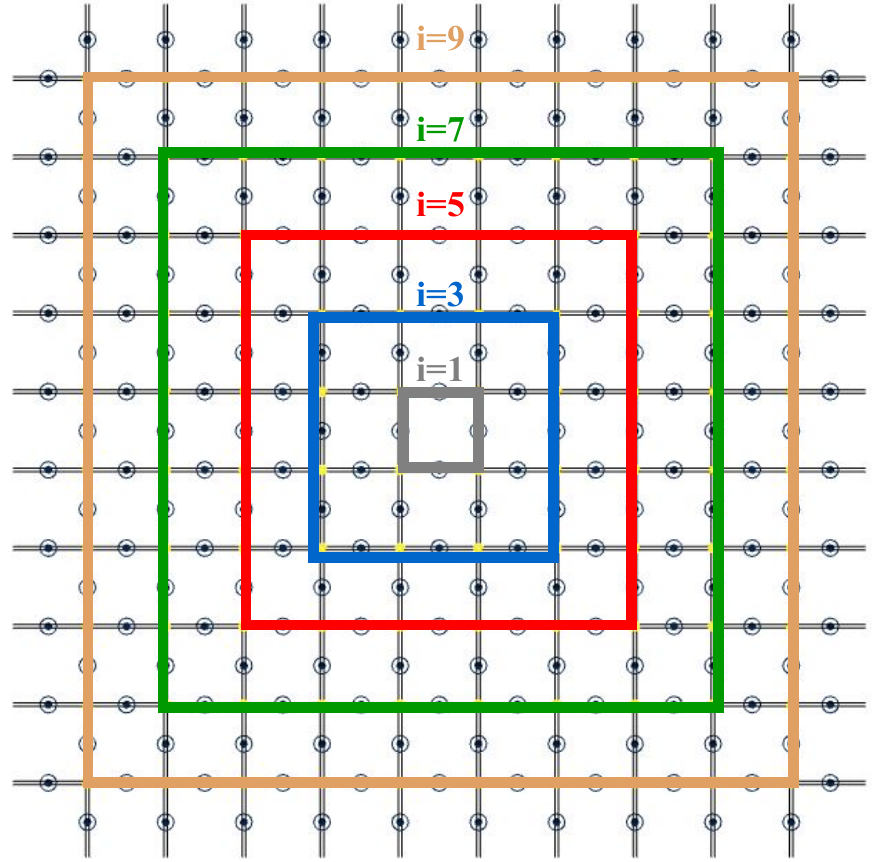}
	\caption{Locations of parking lots.}
	\label{fig:parking}
\end{figure}

Figure \ref{fig:impact of parking} shows the results associated with the five parking location patterns in Figure \ref{fig:parking}. Overall, $i=7$ generates the highest operational efficiency among the tested values. When $i=9$, although more parking locations are offered, the number of trips passing those locations is smaller than that when $i=7$. Consequently, fewer vehicles may be held, and the enhancement of the efficiency is less significant. On the other hand, a smaller value for $i$ can also lead to a reduction in the control performance for two reasons: 1) the number of parking locations is reduced; 2) the number of vehicles passing those locations that have a remaining travel distance longer than the threshold value is also reduced. For example, if we assume all vehicles are always using distance-based shortest path to their destination, when $i=1$, the longest remaining travel distance for vehicles passing any of the four centroids is 10 links. However, this maximum value for a centroid when $i=7$ is 16 links, which leads to a higher number of vehicles that can be potentially held. 

\begin{figure}[!ht]
	\centering
	\begin{subfigure}[h]{0.49\textwidth}
		\centering
		\includegraphics[width=\textwidth]{./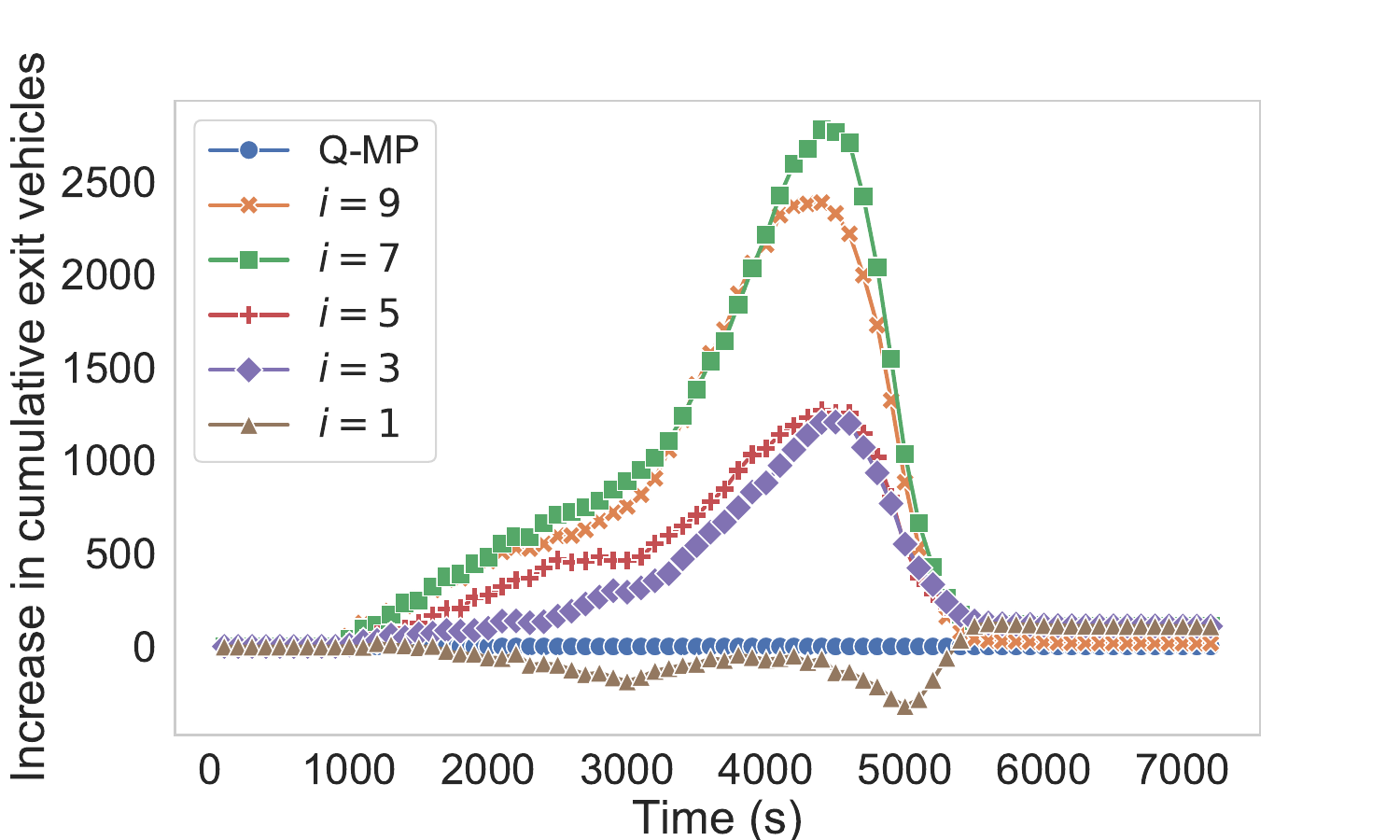}
		\caption{Comparison of exit vehicles}
		\label{fig:parking on mfd}
	\end{subfigure}
	\begin{subfigure}[h]{0.49\textwidth}
		\centering
		\includegraphics[width=\textwidth]{./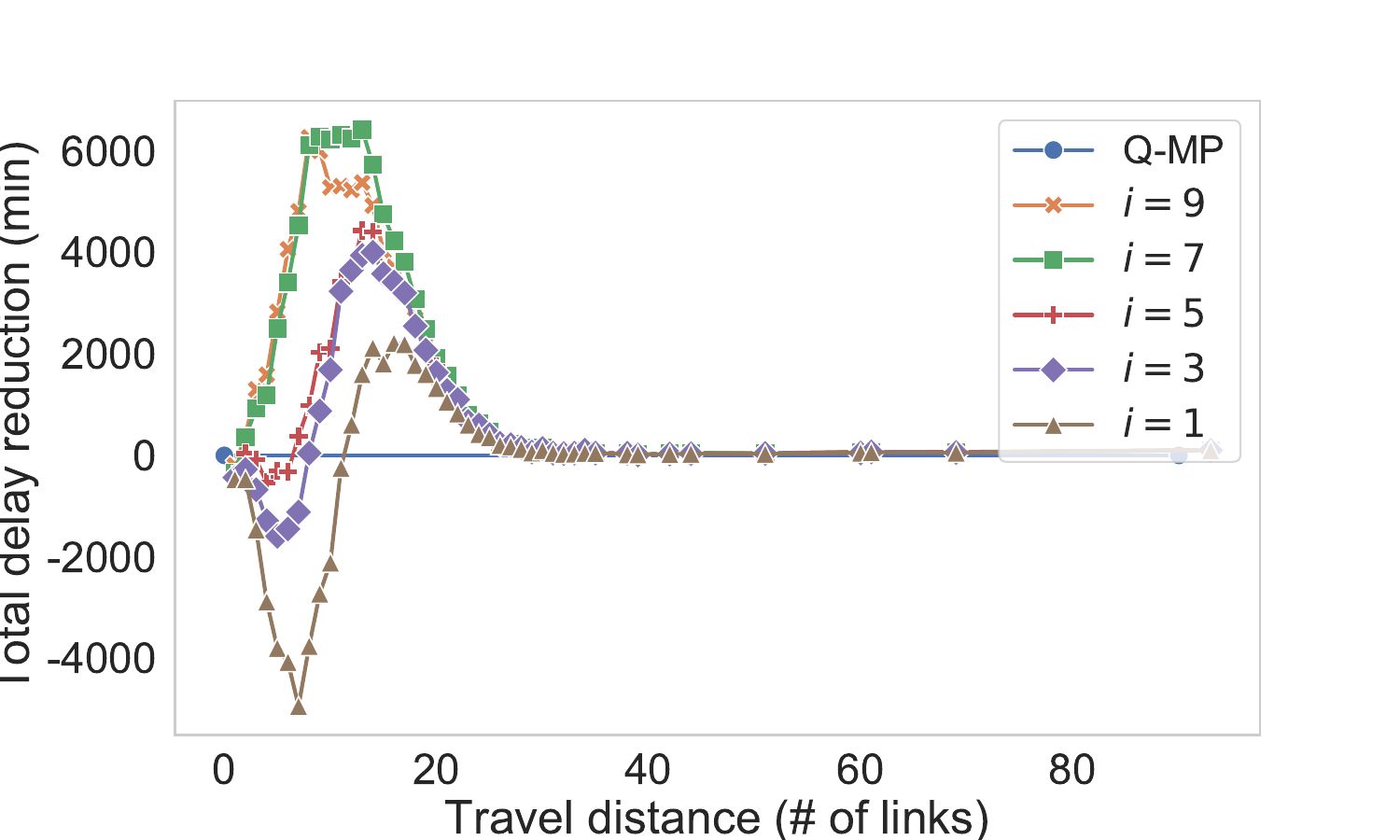}
		\caption{Vehicle delay}
		\label{fig:parking on delay}
	\end{subfigure}
	\hfill
	\caption{Influence of parking space.}
	\label{fig:impact of parking}
\end{figure}

The results in this section also inherently assume that sufficient parking space is available at each of the parking locations to accommodate vehicles that should be held. To verify if this is a realistic assumption, Figure \ref{fig:n of veh per cen} shows the minimum, average, and maximum numbers of holding vehicles across the 28 parking locations when $i=7$. Although the average number of vehicles at a parking location is generally below 25, which should not be difficult to accommodate by a typical parking garage or parking lots, the value can grow as large as 60, which exceeds the capacity for relatively small parking facilities. Thus, it is also imperative to consider the performance of the proposed strategy when parking capacity limits exist.

\begin{figure}[h]
	\centering
	\includegraphics[width=0.7\textwidth]{./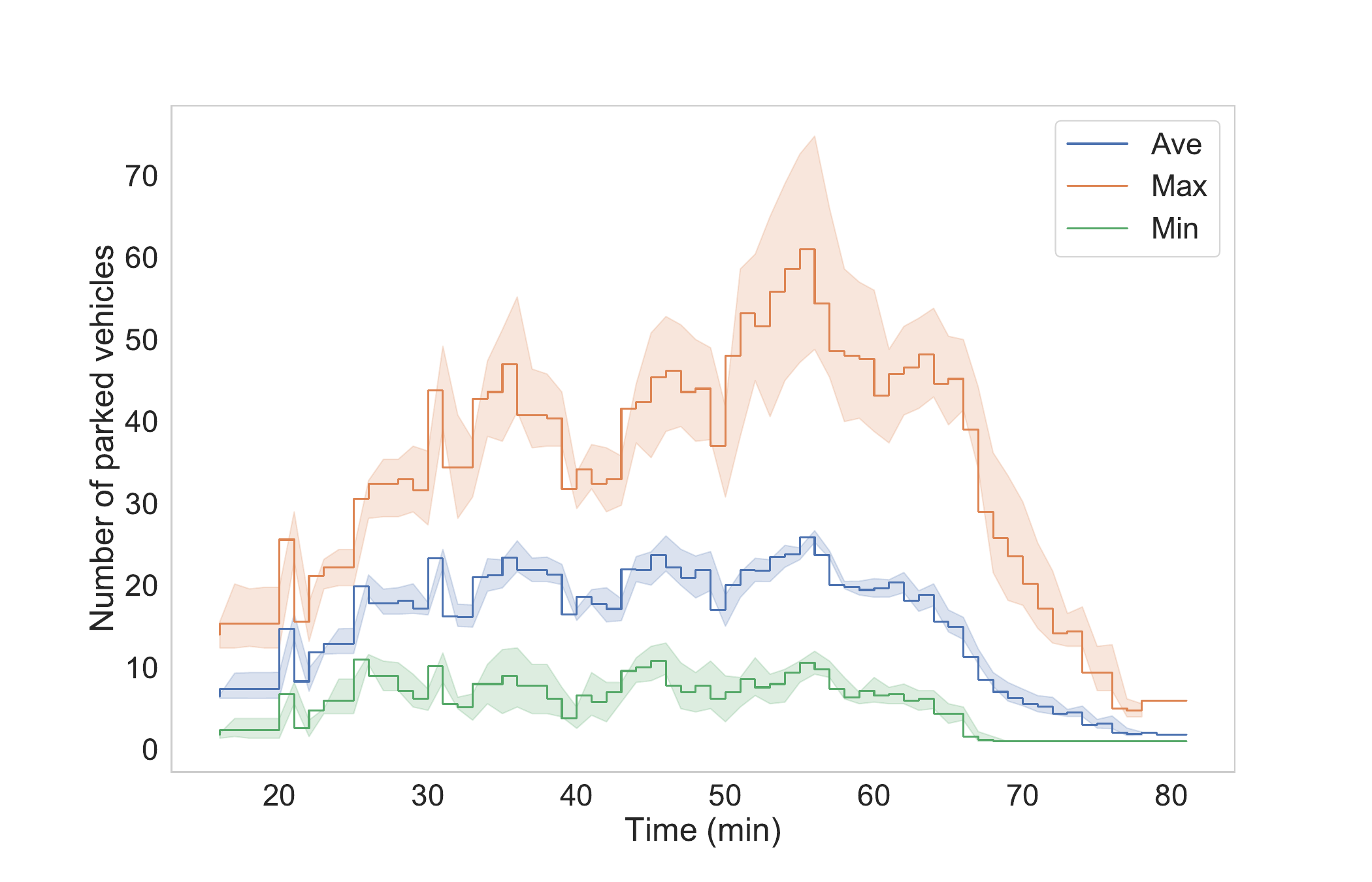}
	\caption{Average number of holding vehicle in each parking location when $i=7$.}
	\label{fig:n of veh per cen}
\end{figure}

\subsubsection{Perimeter-like parking distribution with  parking capacity limits}\label{sec: results with parking limit}

One method to overcome the parking capacity limitation is to reduce the number of vehicles that are required to park by increasing $\phi$ and/or decreasing $\tau$. However, doing so would reduce the effectiveness of the strategy, as shown in the previous sections. To investigate the performance under more realistic conditions, we restrict the number of parking spots at each parking location, $\bar{N}$, to one of the values in the set \{10, 20, 30, 40\}. When the number of held vehicles at a parking location reaches its capacity limit, no additional CAVs can be held there until some previously held CAVs have left the location. 

Figure \mbox{\ref{fig:influence of parking limit}} shows the impact of the parking capacity on the control performance. The results reveal that when $\bar{N}>=30$, the performance is essentially the same as when parking capacity is unrestricted. However, when $\bar{N}<30$, the performance of the proposed strategy is impacted by the parking capacity restriction. For this case, adding more parking spaces can accommodate more vehicles and, as a result, enhance the control performance. These results: 1) suggest existing parking capacity - no matter how small - can be utilized to improve performance; and, 2) provide insights on the evaluation of parking capacity increases and the corresponding benefits on operational performance that they can provide.

\begin{figure}[h]
	\centering
	\includegraphics[width=0.5\textwidth]{./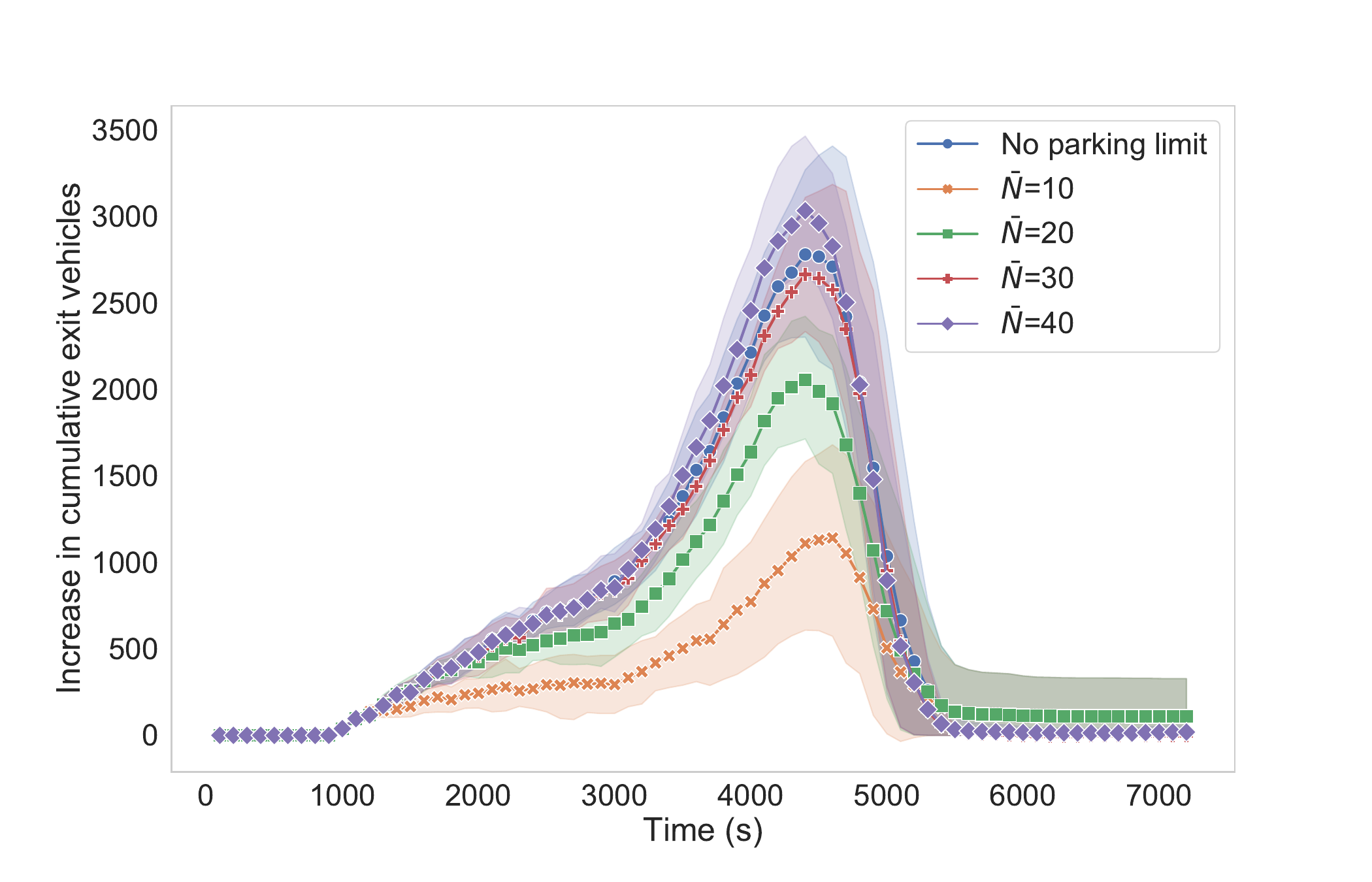}
	\caption{Influence of $\bar{N}$ on Q-MP\textsuperscript{H}.}
	\label{fig:influence of parking limit}
\end{figure}

Figure \mbox{\ref{fig:influence of parking limit}} also shows that $\bar{N}=40$ generates a slightly higher (statistically insignificant) vehicle exit rate than the case with unrestricted parking space. The reason is that, as shown in Figure \mbox{\ref{fig:compare of number of hold}}, when $\bar{N}=40$, the average and minimum number of holding vehicles across the 28 parking locations are similar to the case without parking restriction, while the maximum number of holding vehicles would be significantly reduced. This finding unveils that, on average, a parking space restriction of $\bar{N}=40$ is enough to accommodate vehicles to achieve an efficiency close to the situation as if there was no parking restriction. In the meanwhile, unrestricted parking scenario can not only generate unnecessarily too many vehicles at certain parking locations but also reduce the efficiency of holding vehicles returning to the network, which can be avoided by adding appropriate parking restriction.  

\begin{figure}[h]
	\centering
	\includegraphics[width=0.7\textwidth]{./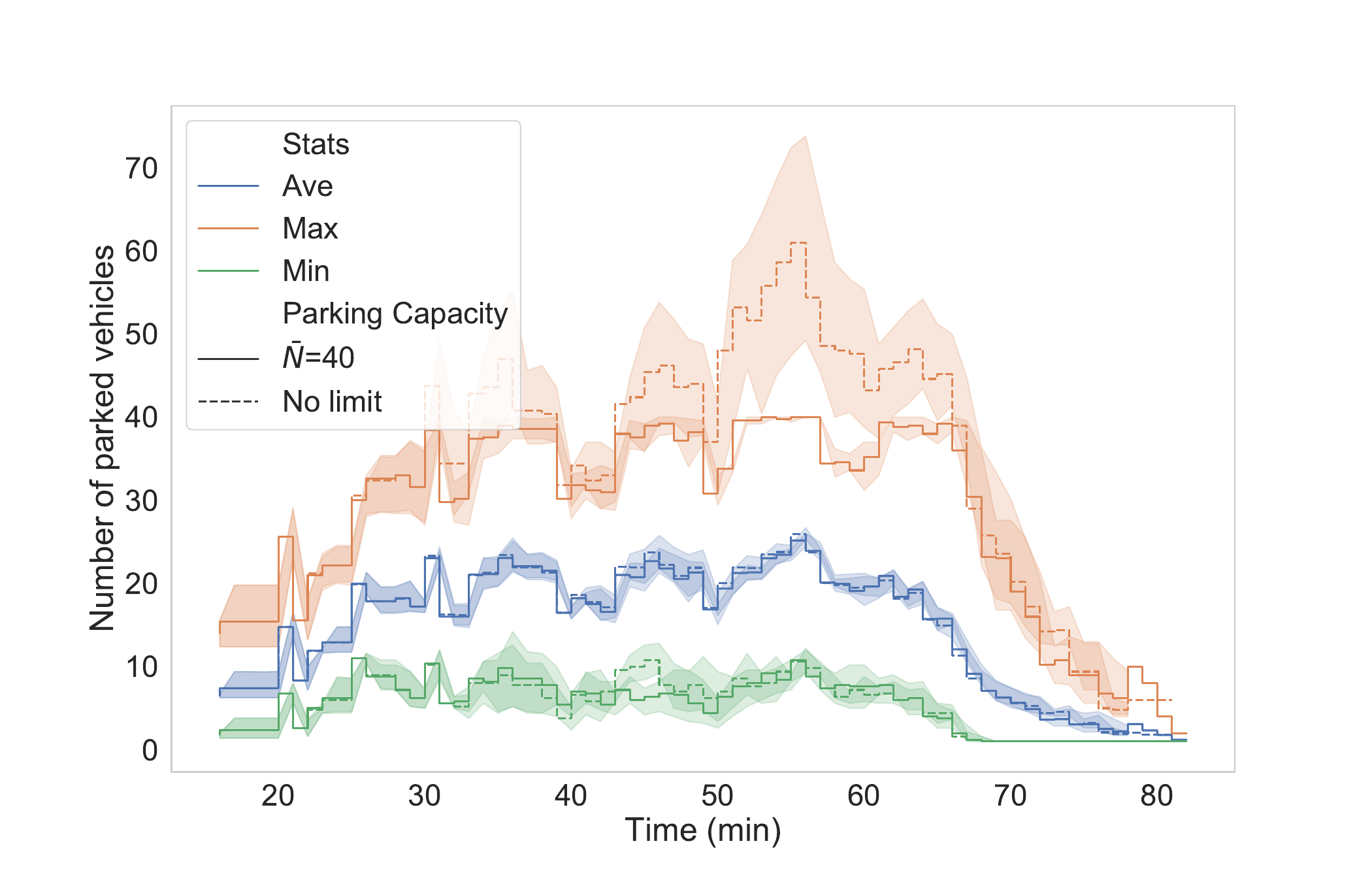}
	\caption{Comparison of number of holding vehicles.}
	\label{fig:compare of number of hold}
\end{figure}

\subsubsection{Comparison with two perimeter control algorithms}\label{sec:comparison_perimeter}

The pattern of parking locations shown in Figure \ref{fig:parking} is similar to the perimeters used in many perimeter control methods. Therefore, it is worth comparing the proposed holding strategy, which makes use of the available infrastructure to temporarily mitigate congestion, to perimeter control methods, which block inflows to temporarily reduce traffic burden for the congested region. Two perimeter control algorithms -- Bang-Bang control and N-MP -- are used for the comparison.

For simplicity, we only consider the parking pattern associated with $i=7$ in Figure \ref{fig:parking} and set perimeter intersections for both perimeter control algorithms to be on the square boundary. Under the uniform demand setting depicted in Section \ref{sec: net setup}, Bang-Bang control generates worse performance than the pure Q-MP control. This is because under the uniform demand pattern, there exist significant outflows from the protected region. After implementing Bang-Bang control, the queue accumulation at the external side of the perimeter can further block outflows from the protected region, which reduces the operational efficiency considerably. For simplicity, the results under this condition are not shown. To ensure a fair comparison, the concentrated demand pattern defined in Section \ref{sec: net setup} is utilized to avoid this blocking effect. 

As introduced in Section \ref{sec: perimeter base}, the critical density $\rho^p_{cr}$ is the condition for the activation of inflow restriction for perimeter control algorithms. Therefore, using Bang-Bang control as the baseline algorithm, we tested the following values \{30, 35, 40, 45, 50\} veh/(lane-km) for $\rho^p_{cr}$. Figure \ref{fig:bangbang} shows that Bang-Bang control with an appropriate critical density can improve the control performance effectively. Accordingly, $\rho^p_{cr}=40$ veh/(lane-km), which generates the best performance for Bang-Bang control, is selected as the critical density for N-MP and Q-MP\textsuperscript{H} as well. Additionally, $\xi$ in Eq. \eqref{eq:psi} is manually tuned for N-MP. Ten equally spaced values between 0.2 to 2 are tested for $\xi$, and the value of $\xi=1.4$ leads to the best performance in terms of the network exit rate. The results are omitted for simplicity.

\begin{figure}[h]
	\centering
	\includegraphics[width=0.5\textwidth]{./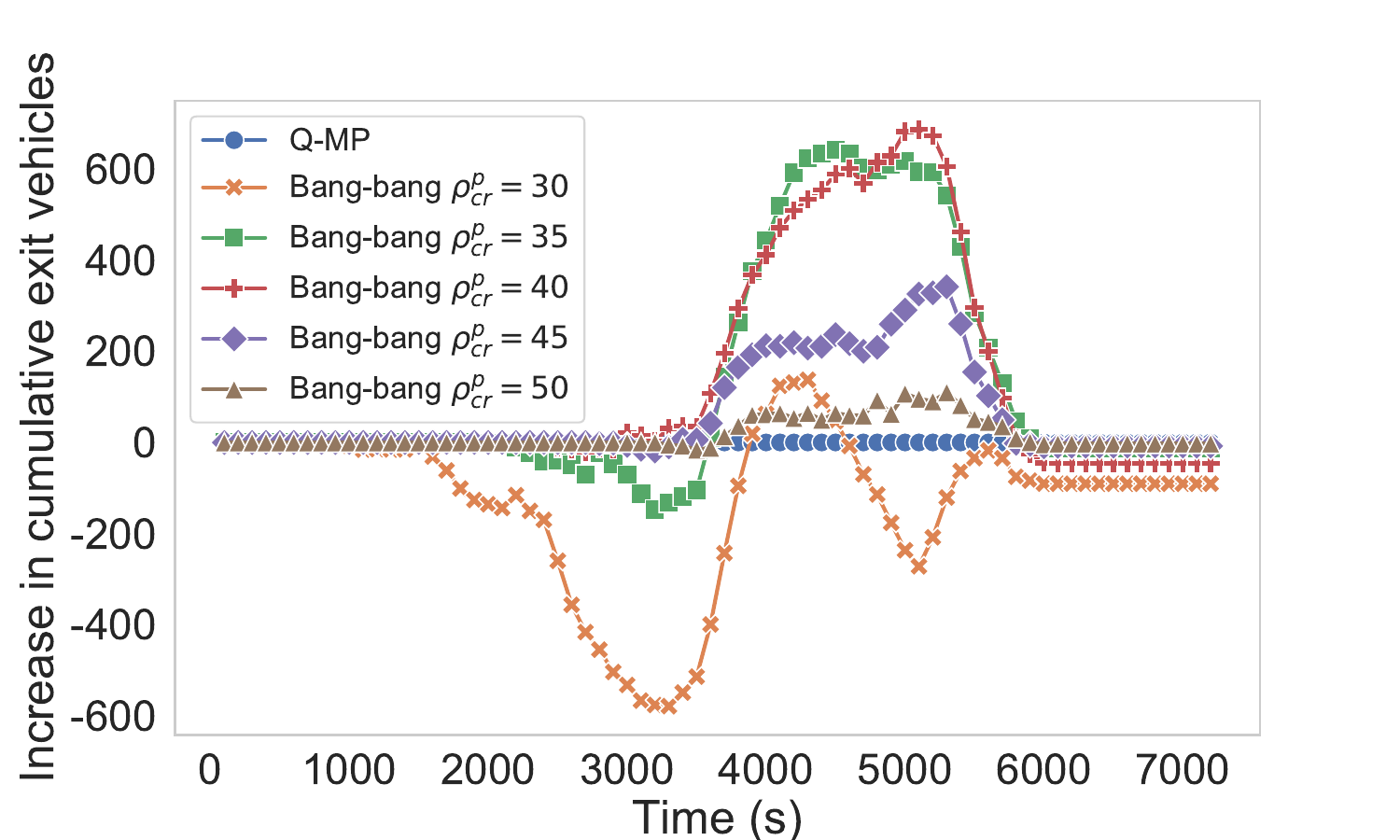}
	\caption{Influence of $\rho^p_{cr}$ on Bang-Bang control.}
	\label{fig:bangbang}
\end{figure}

The comparison of cumulative number of exit vehicles between Q-MP\textsuperscript{H}, Bang-Bang control, and N-MP is shown in 
Figure \ref{fig:comparison}, in which Bang-Bang control is used as the baseline. Note that due to the removal of outgoing trips from the protected region, the average remaining travel distance passing the centroids on the perimeter is much shorter. Consequently, the threshold value for the remaining travel distance defined in Section \ref{sec:phi and tau}, i.e., $\phi = 8$, is too large to hold enough vehicles to promote significant improvement. Therefore, we select a smaller value, $\phi=5$ links, for this simulation.  Both N-MP and Q-MP\textsuperscript{H} outperform Bang-Bang control. Interestingly, N-MP generates a similar peak value for the increase in the cumulative number of exit vehicles to Q-MP\textsuperscript{H}; however, Q-MP\textsuperscript{H} obtains its higher performance more quickly, which leads to further reduction in travel time. It needs to be emphasized that since all three algorithms utilize the same critical density value, this advantage does not result from an earlier activation of the control strategy. Instead, it is because the queue accumulation at the perimeter from Q-MP\textsuperscript{H} is shorter than N-MP, which mitigates the negative impact on the overall efficiency during the congestion formation process. N-MP catches Q-MP\textsuperscript{H} after the traffic congestion diminishes during the cool down period. 

\begin{figure}[h]
	\centering
	\includegraphics[width=0.5\textwidth]{./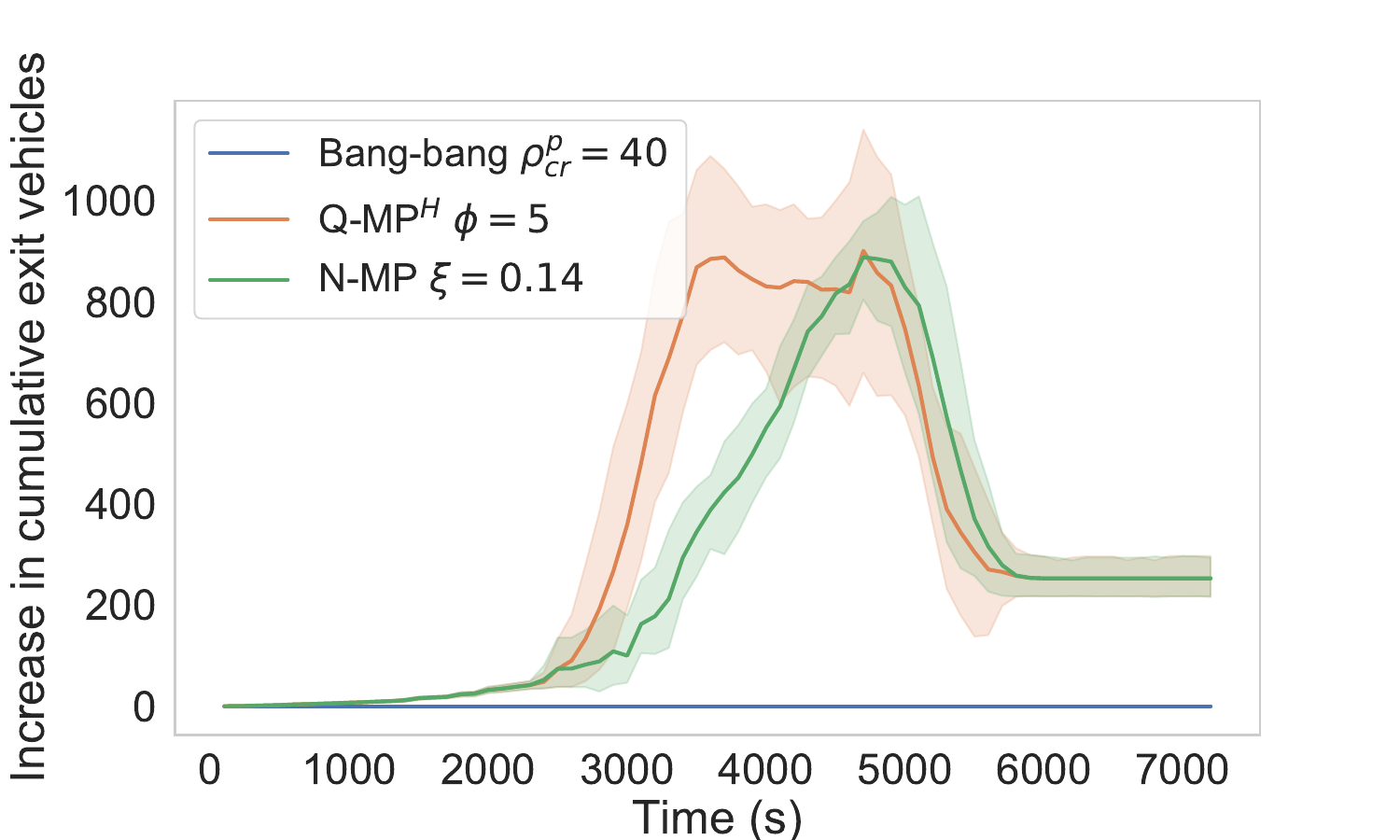}
	\caption{Comparison between Q-MP\textsuperscript{H}, Bang-Bang control, and N-MP.}
	\label{fig:comparison}
\end{figure}

\subsubsection{Random parking locations with infinite parking capacity}\label{sec:randomparking}

All tested patterns for parking locations in Figure \ref{fig:parking} form regular squares around the center of the network; however, such configurations may not be available for most networks. For each configuration shown in Figure \ref{fig:parking}, we randomly select the same number of parking locations from the entire network to apply the same holding strategy. Doing so demonstrates the flexibility of the proposed method and the ``perimeter-less" nature of the metering that it can achieve. $i=1$ is excluded since it worsen the traffic operation compared to Q-MP, as shown in Figure \ref{fig:impact of parking}. For each $i$, 10 sub-configurations with different random seeds are generated. Figure \ref{fig:example of random parking locations} shows examples of randomly generated parking locations for the four tested indices.

\begin{figure}[!ht]
	\centering
	\begin{subfigure}[h]{0.49\textwidth}
		\centering
		\includegraphics[width=\textwidth]{./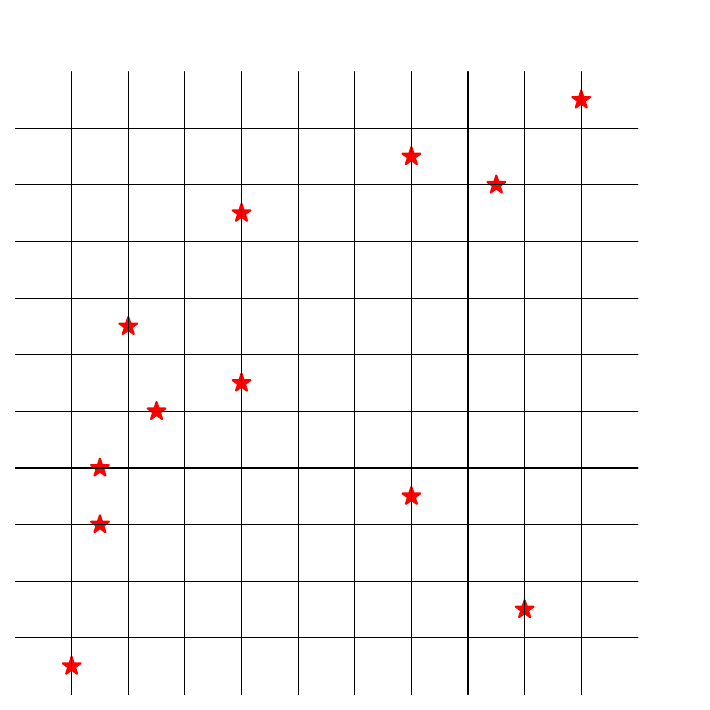}
		\caption{$i=3$}
		\label{fig:i=3 loc}
	\end{subfigure}
	\begin{subfigure}[h]{0.49\textwidth}
		\centering
		\includegraphics[width=\textwidth]{./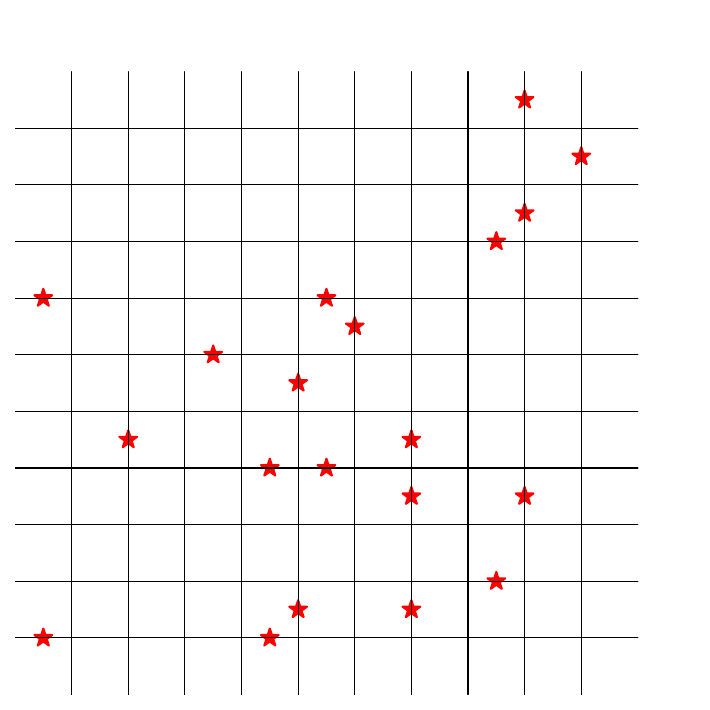}
		\caption{$i=5$}
		\label{fig:i=5 loc}
	\end{subfigure}
	
	\centering
	\begin{subfigure}[h]{0.49\textwidth}
		\centering
		\includegraphics[width=\textwidth]{./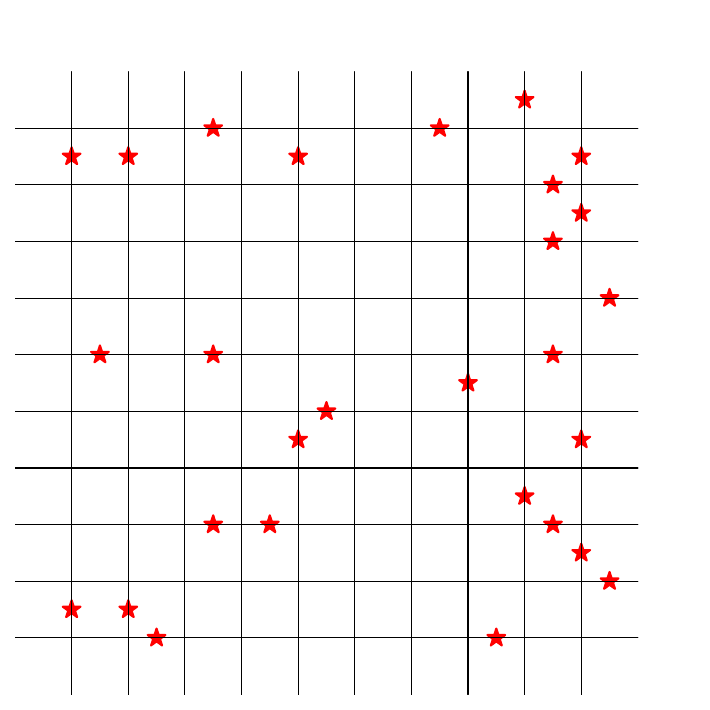}
		\caption{$i=7$}
		\label{fig:i=7 loc}
	\end{subfigure}
	\begin{subfigure}[h]{0.49\textwidth}
		\centering
		\includegraphics[width=\textwidth]{./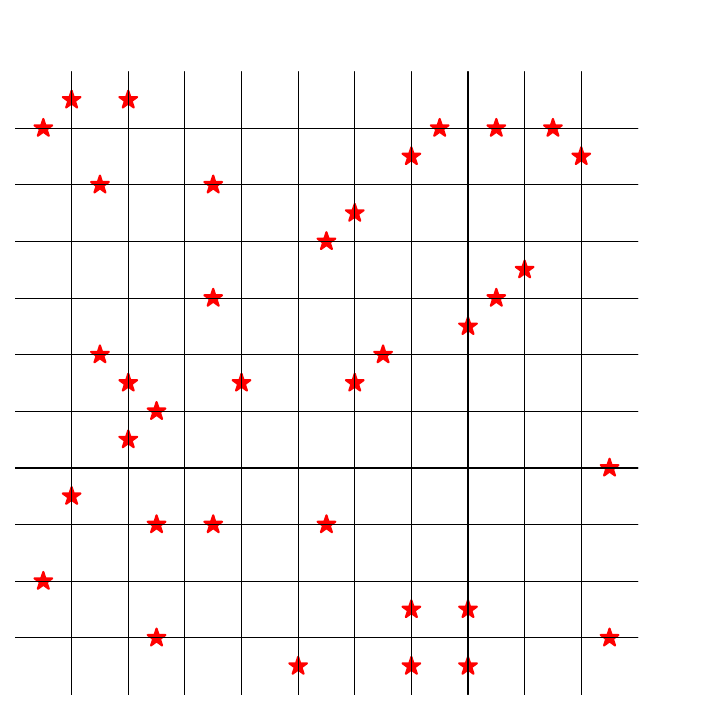}
		\caption{$i=9$}
		\label{fig:i=9 loc}
	\end{subfigure}
	\hfill
	\caption{Examples of random parking locations.}
	\label{fig:example of random parking locations}
\end{figure}

The simulation results are shown in Figure \ref{fig:impact of random parking location}. It shows that except for $i=7$, random parking locations can help foster the control performance, which is in line with our expectation. As mentioned before, when $i=9$, all parking locations are distributed at the boundary of the network, and the number of vehicles passing those areas is limited. As shown in Figure \ref{fig:i=9 loc}, adding randomness can help overcome this drawback and make a better use of those parking locations. Similarly, when $i=5$, the issue of short remaining travel distance existing in regular parking locations can also be mitigated by the existence of parking locations farther away from the center from random distributions, as shown by Figure \ref{fig:i=5 loc}. When $i$ is very small, i.e., $i=3$, there are not enough parking locations to hold vehicles to generate significant improvement from random parking locations. Moreover, Figure \ref{fig:i=7} shows that the random parking distribution does not have significant impact on the performance when $i=7$, which implies that the regular parking location pattern is already appropriate to satisfy the need of the holding strategy to produce a favorable improvement.

\begin{figure}[!ht]
	\centering
	\begin{subfigure}[h]{0.49\textwidth}
		\centering
		\includegraphics[width=\textwidth]{./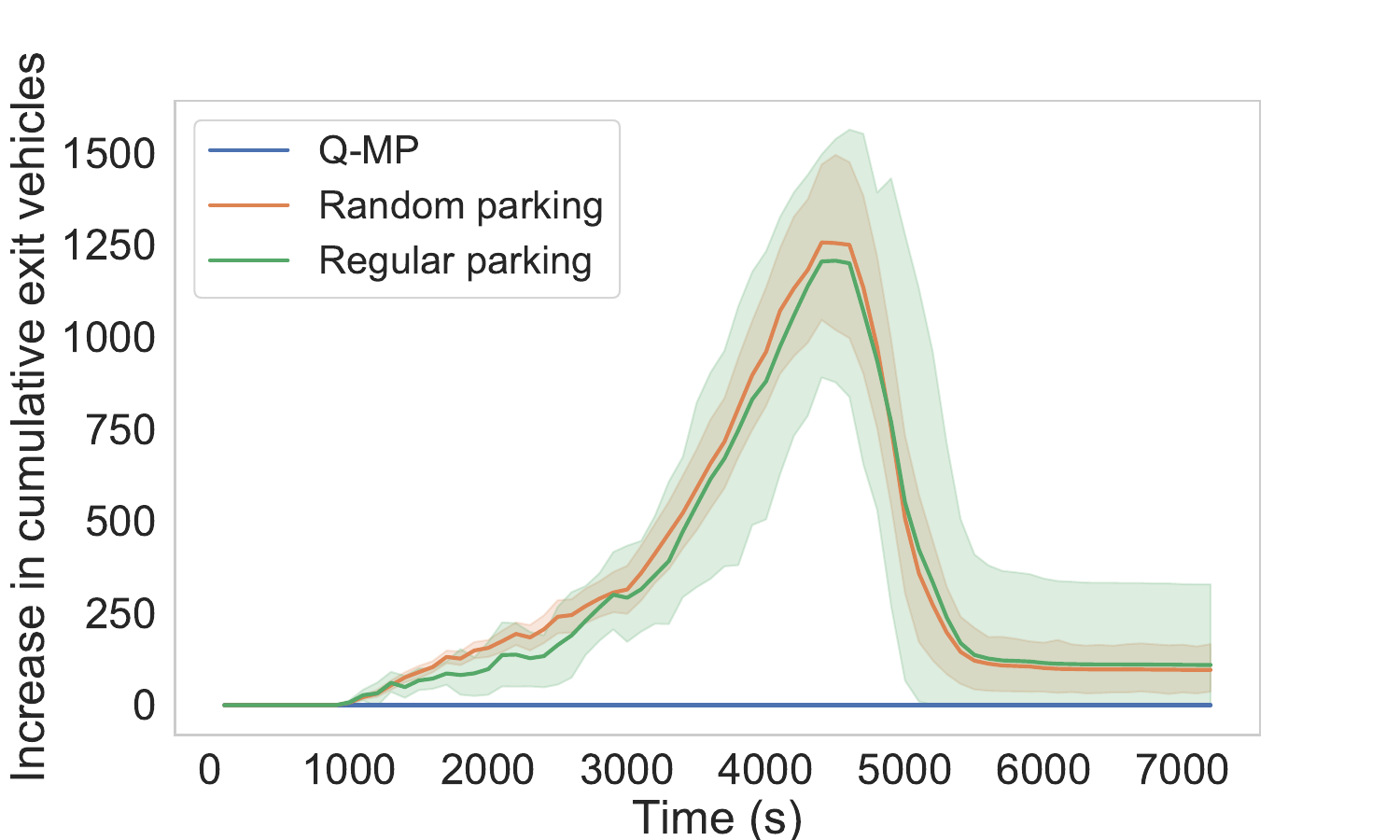}
		\caption{$i=3$}
		\label{fig:i=3}
	\end{subfigure}
	\begin{subfigure}[h]{0.49\textwidth}
		\centering
		\includegraphics[width=\textwidth]{./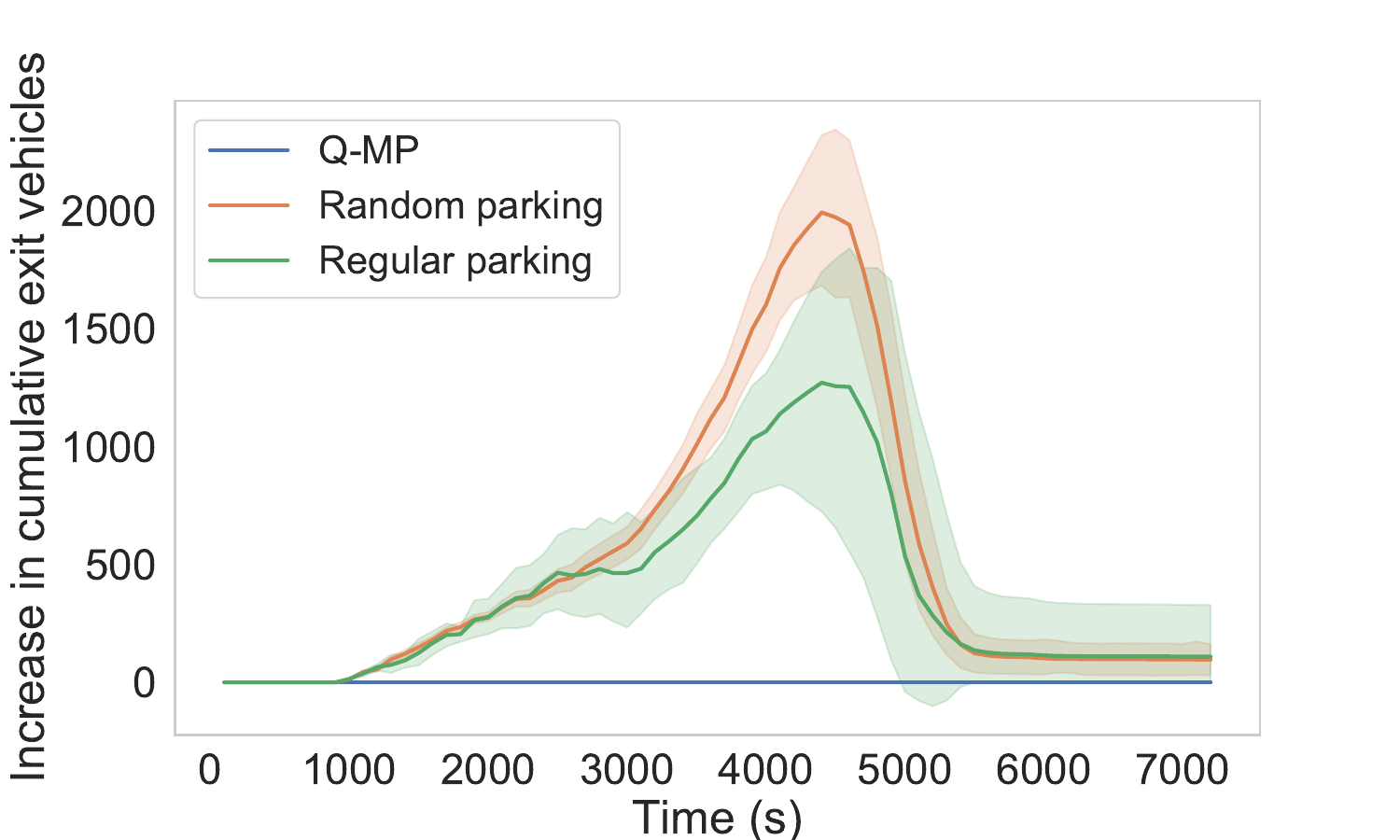}
		\caption{$i=5$}
		\label{fig:i=5}
	\end{subfigure}
	
	\centering
	\begin{subfigure}[h]{0.49\textwidth}
		\centering
		\includegraphics[width=\textwidth]{./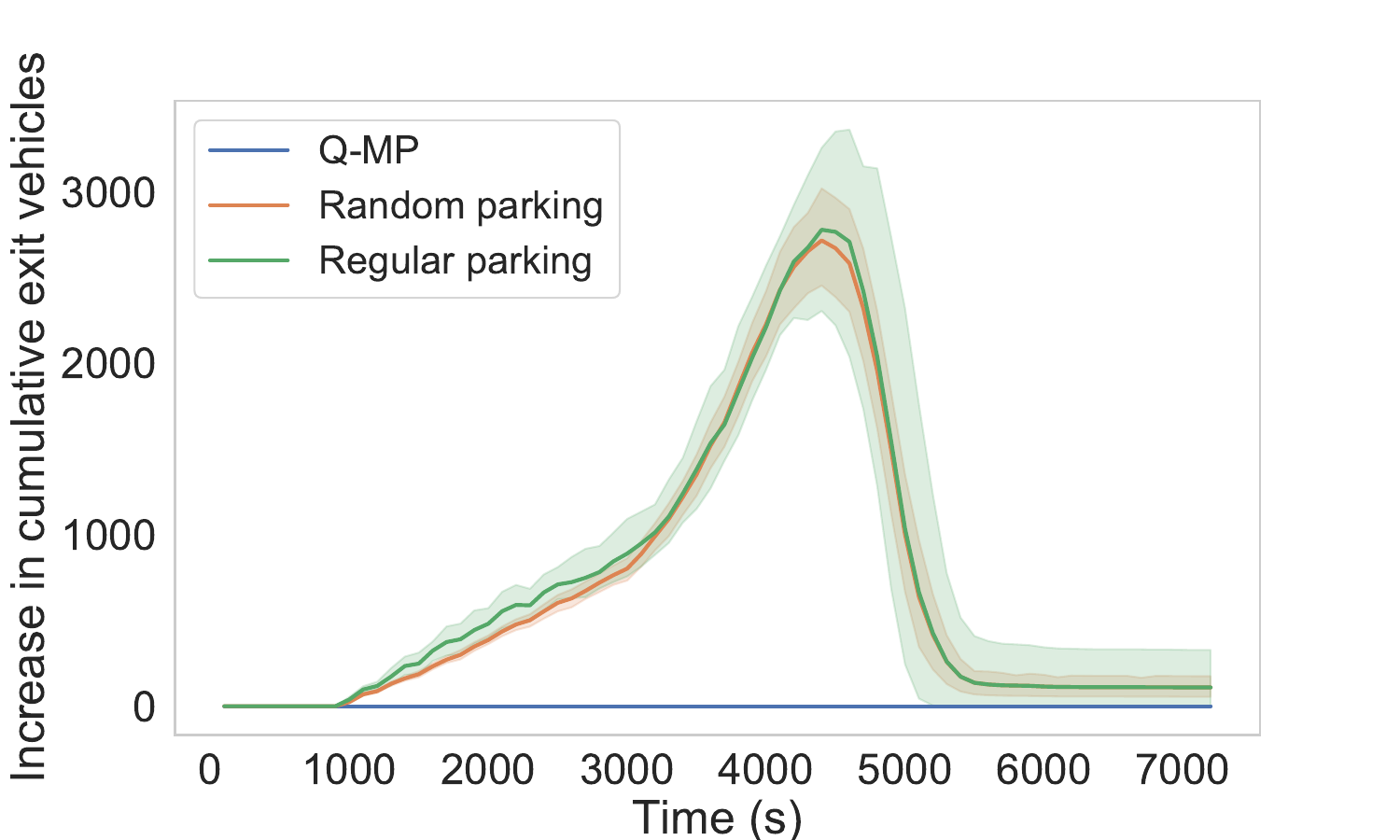}
		\caption{$i=7$}
		\label{fig:i=7}
	\end{subfigure}
	\begin{subfigure}[h]{0.49\textwidth}
		\centering
		\includegraphics[width=\textwidth]{./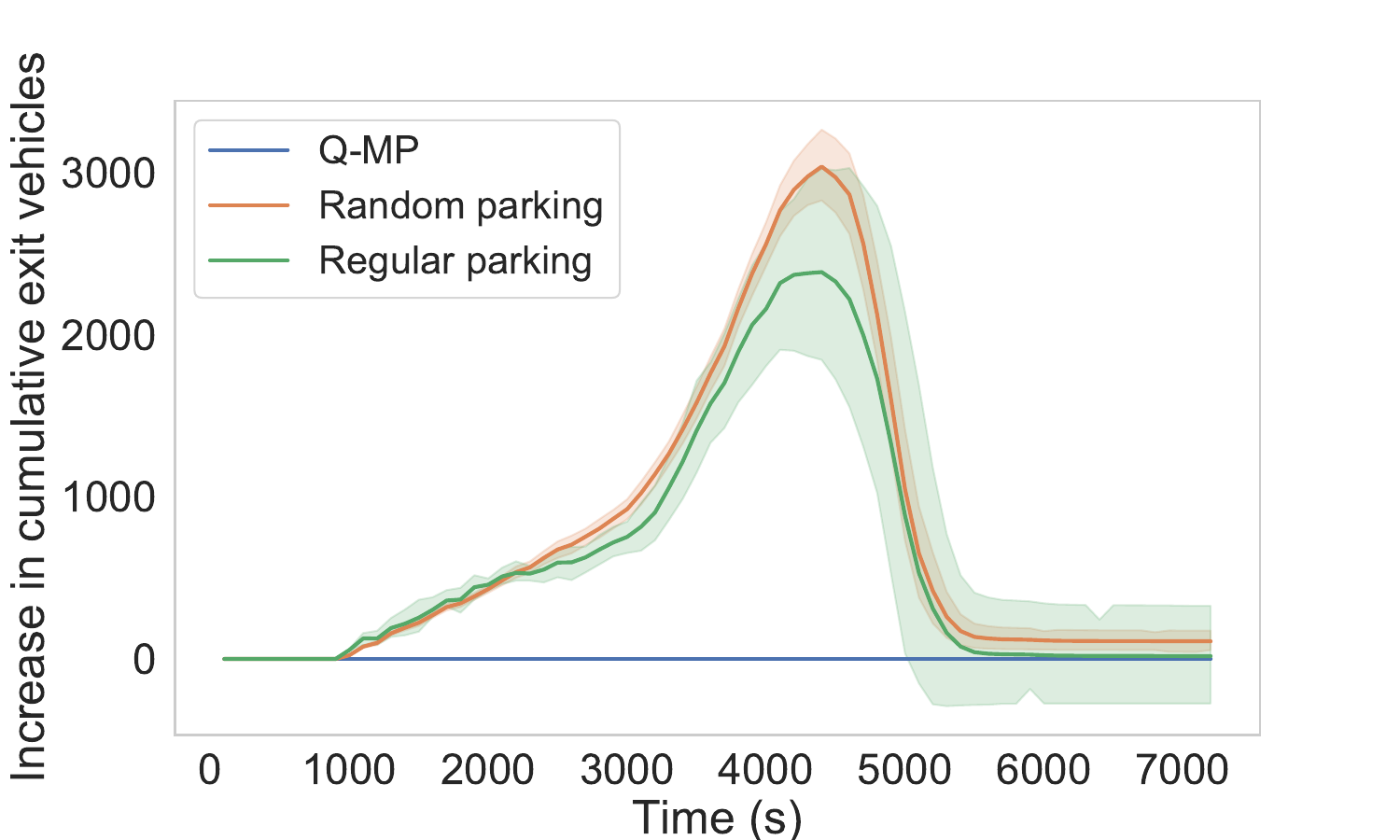}
		\caption{$i=9$}
		\label{fig:i=9}
	\end{subfigure}
	\hfill
	\caption{Influence of locations of parking locations.}
	\label{fig:impact of random parking location}
\end{figure}

Overall, Figure \ref{fig:impact of random parking location} indicates that the proposed holding strategy does not require a specific distribution of parking locations. On the contrary, the random parking locations can potentially help enhance the control performance further. 

\subsubsection{Random parking locations with parking capacity limits}\label{sec:randomparking_limit}
To examine the impact of parking capacity under random spatial distributions of parking facilities, we apply the same capacity limits considered in Section \ref{sec: results with parking limit} to the parking configuration shown in Figure \ref{fig:i=7 loc}. The corresponding results are presented in Figure \ref{fig:capacity_random}. The findings are consistent with those obtained under the perimeter-like parking distribution (see Figure \ref{fig:influence of parking limit}). Even a parking capacity of 10 spaces per facility yields substantial operational improvement compared to the baseline without the holding strategy. More importantly, a capacity limit of 20 spaces produces performance nearly identical to the scenario with no parking constraint. These results demonstrate that the proposed holding strategy does not rely on unrealistic parking availability and remains effective under more general conditions, including irregular spatial distributions and limited parking capacities.

\begin{figure}[h]
	\centering
	\includegraphics[width=0.5\textwidth]{./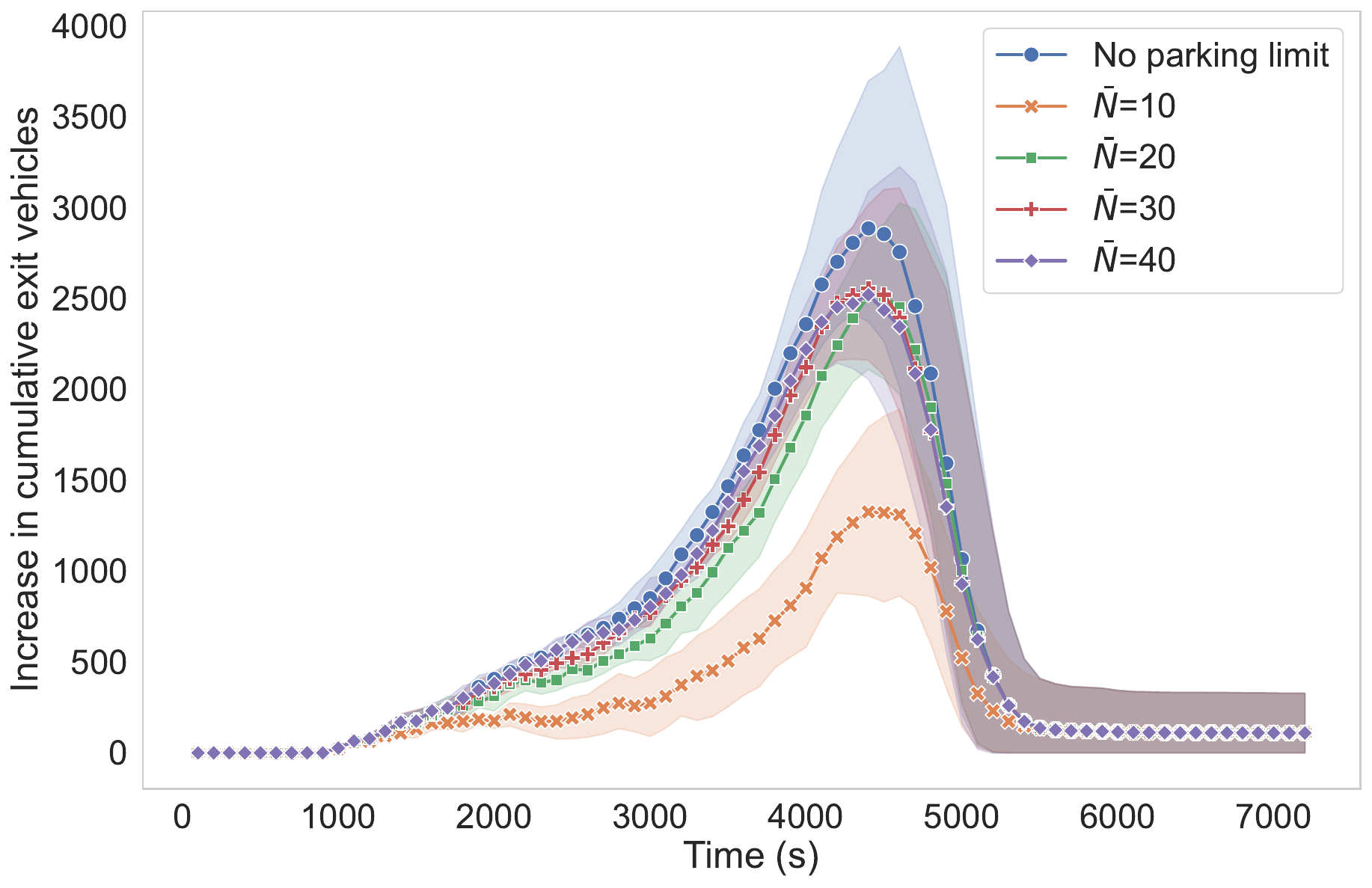}
	\caption{Impact of parking capacity on random facility distribution.}
	\label{fig:capacity_random}
\end{figure}

Moreover, to validate the design choice of holding longer trips rather than shorter ones—as theoretically motivated in Section \ref{sec:mfd holding decision}—we compare the proposed mechanism with an alternative strategy that holds vehicles whose remaining travel distance is shorter than 8 links. The comparison is conducted on the same network shown in Figure \ref{fig:i=7 loc}, under both infinite and limited parking capacity scenarios. The limited-capacity case is particularly important, as it ensures that the number of held vehicles is comparable between the two strategies. This allows us to isolate the effect of the holding rule itself, rather than differences arising from the total number of vehicles removed from the network.

\begin{figure}[h]
	\centering
	\includegraphics[width=0.5\textwidth]{./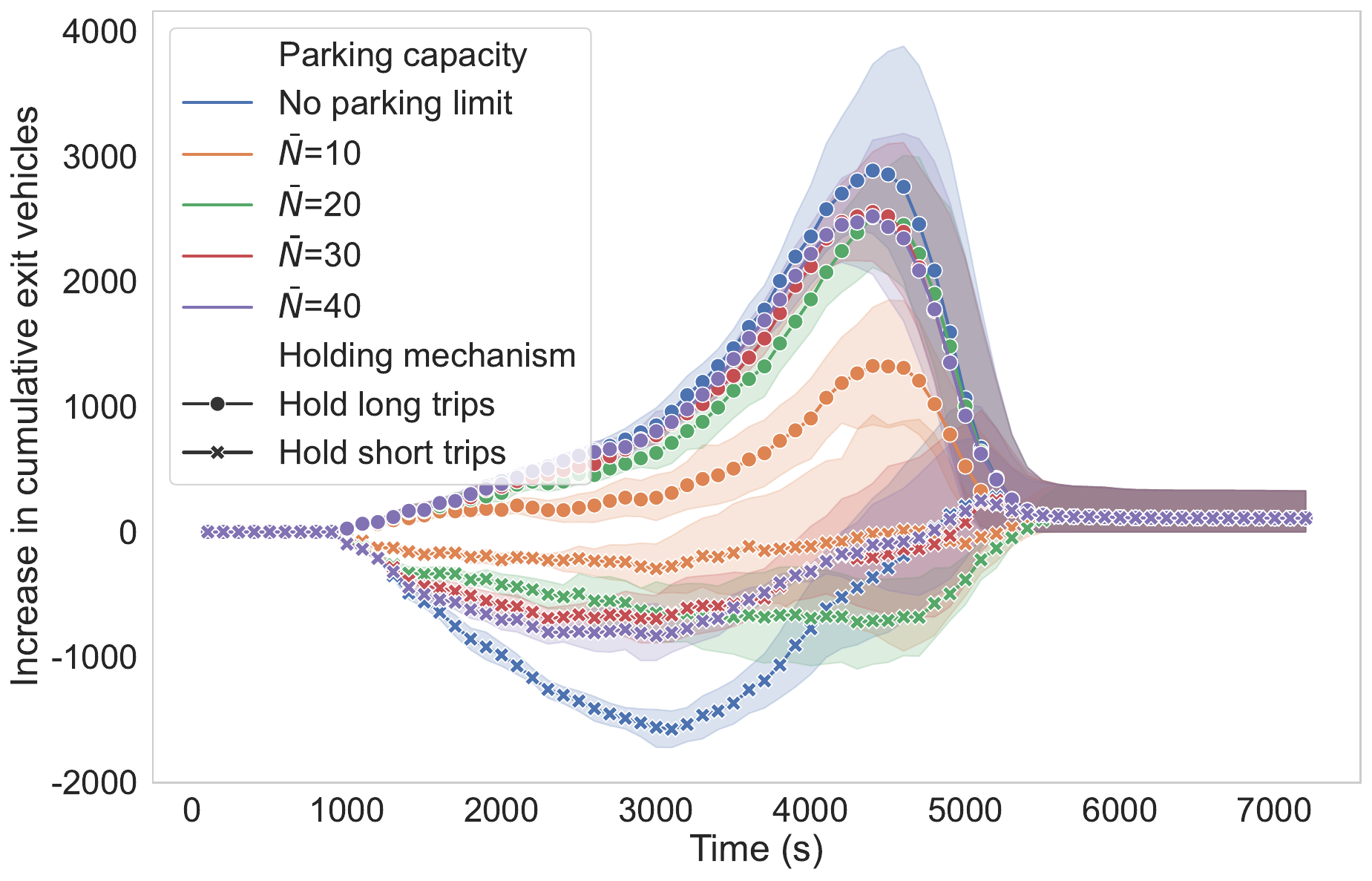}
	\caption{Comparison between holding long trips vs. holding short trips.}
	\label{fig:comparision_holding_mechanism}
\end{figure}

The results are presented in Figure \ref{fig:comparision_holding_mechanism}. Under the tested setting with $\tau=6$ mins, holding vehicles with shorter remaining distances leads to worse performance than the baseline scenario without any holding strategy. This outcome is consistent with the theoretical analysis. Holding short trips increases the average remaining travel distance of vehicles circulating in the network, thereby reducing the vehicle exit rate. Furthermore, since vehicles with longer remaining distances inherently require more time to complete their trips, re-introducing held short-trip vehicles to the network after $\tau=6$ mins may not meaningfully alleviate congestion conditions. As a result, the overall control effectiveness deteriorates.

\subsection{Results in a partial CAV environment}\label{sec: partial cav}

Previous sections assume that all vehicles in the network are CAVs so that all vehicles can be held whenever needed. Although this assumption will not be realized in the near future, the mixed environment including both CAVs and HDVs is anticipated to be the typical condition for transportation systems. Therefore, the control performance of the proposed approach is tested in such mixed environment in which only CAVs can be held. Following the previous section, 28 centroids are randomly selected to be the parking locations for CAVs. The values \{20, 40, 60, 80, 100\} are tested for $\alpha$, the penetration rate of CAVs. The results are shown in Figure \ref{fig:impact of penetration rate in a mixed traffic environment}. Overall, the control performance can be improved with the increase in the penetration rate. Both Figures \ref{fig:pr on mfd} and \ref{fig:pr on delay} show that a system with a penetration rate of $\alpha=80$ can generate similar performance with a full CAV environment. In addition, the marginal improvement when $80>\alpha>40$ is more substantial than that when $\alpha<40$. 

\begin{figure}[!ht]
	\centering
	\begin{subfigure}[h]{0.49\textwidth}
		\centering
		\includegraphics[width=\textwidth]{./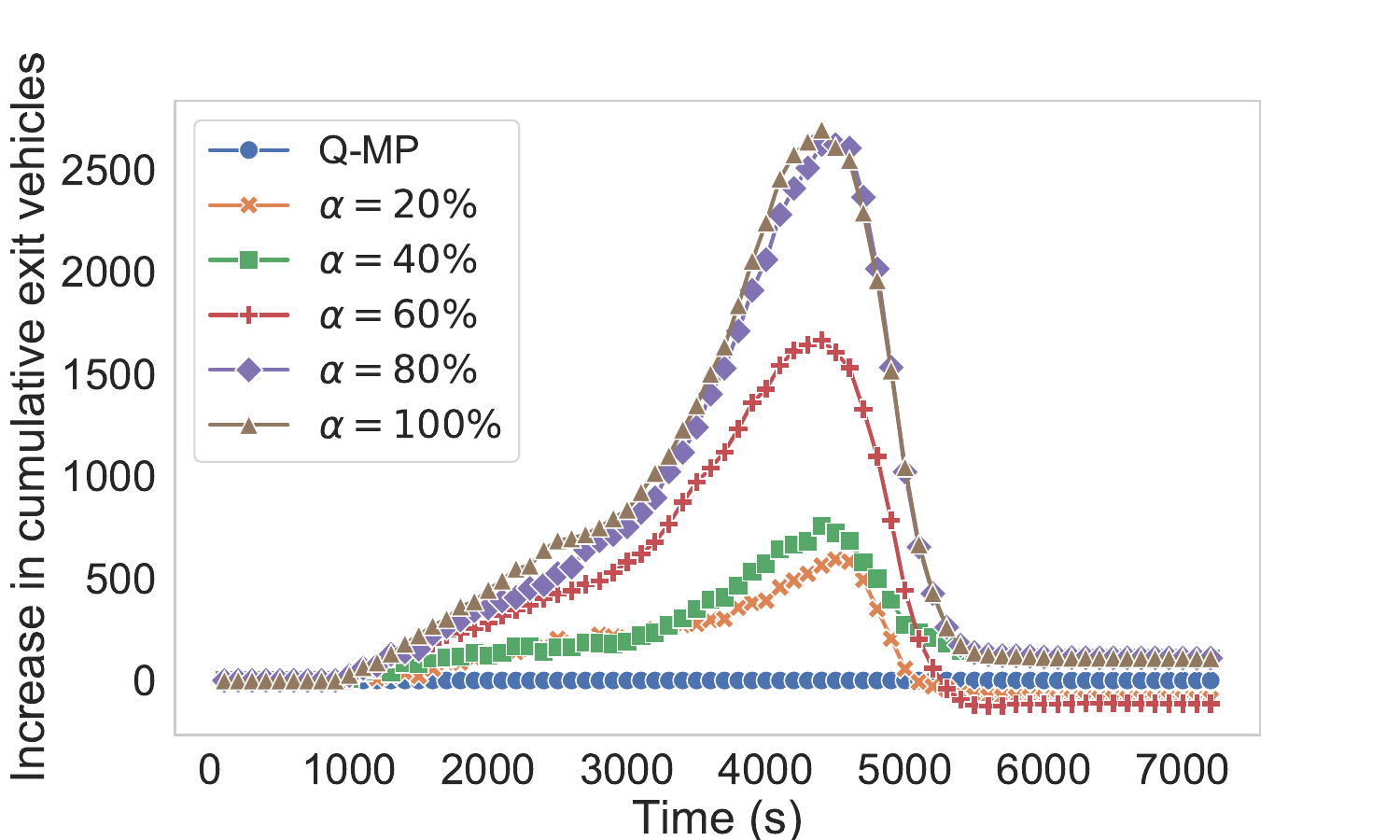}
		\caption{Comparison of exit vehicles}
		\label{fig:pr on mfd}
	\end{subfigure}
	\begin{subfigure}[h]{0.49\textwidth}
		\centering
		\includegraphics[width=\textwidth]{./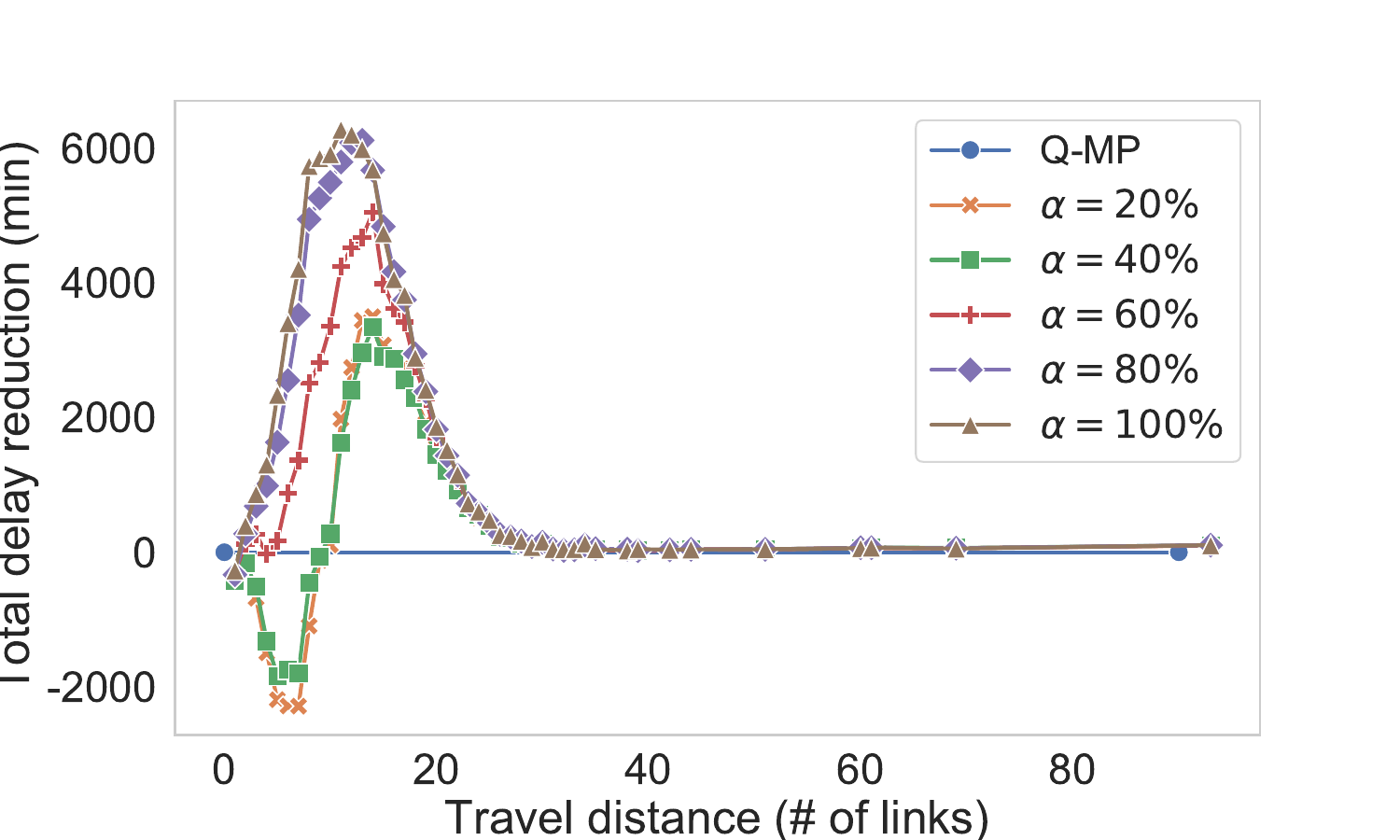}
		\caption{Vehicle delay}
		\label{fig:pr on delay}
	\end{subfigure}
	\hfill
	\caption{Impact of penetration rate in a mixed traffic environment.}
	\label{fig:impact of penetration rate in a mixed traffic environment}
\end{figure}

Note that although the HDVs are not held in the simulation, in reality, a proportion of HDVs may be willing to follow the holding suggestion sent by the control agent if the human drivers are assured that their temporary waiting can help enhance the overall control performance and even possibly reduce their own travel time. The results in this section provide insight on the required number of HDVs that follow the holding suggestion to achieve the expected control performance in that environment. 

\section{Concluding Remarks}\label{sec: conclusion}
This paper proposes a simple perimeter-free regional traffic control strategy for traffic networks with CAVs. When the network becomes congested, the proposed approach temporarily holds a subset of CAVs with long remaining travel distance at nearby parking locations to reduce the traffic burden for vehicles with short travel distance. The held vehicles will re-enter the network after a certain time period holding. By selectively holding long-distance trips, the strategy generates a self-reinforcing congestion mitigation mechanism. The temporary removal of these vehicles directly reduces network accumulation. Meanwhile, because the remaining vehicles tend to have shorter remaining distances, they can complete their trips more quickly, increasing the vehicle exit rate and further accelerating congestion dissipation. This dual effect enhances traffic system recovery beyond what would be achieved through holding arbitrary or short-trip vehicles.

Compared to the traditional perimeter control methods, which are widely used for network-level traffic control, the proposed algorithm makes use of available parking infrastructure to partially overcome the negative impact resulting from queue accumulation from the perimeter control methods and offers more robust control with respect to demand patterns. It can also be implemented without the need to define a perimeter for the region to be protected, making it more flexible. 

Micro-simulation results show that the travel time of non-held vehicles can always be reduced when the proposed strategy is enacted. However, surprising, the travel times of some held vehicles can also be reduced, leading to a significant improvement in the overall operational efficiency. The superiority of the proposed algorithm over two benchmark perimeter control methods has also been demonstrated. Moreover, the control performance has been demonstrated under various configurations of parking locations and capacities. This suggests that existing excess parking capacity - no matter how small or where located - can be leveraged to improve operational performance. The results also can be used  by policy makers to determine if to expand parking facilities. In addition, the results from a mixed environment, including both CAVs and HDVs, provide insight for policy making to encourage HDVs to follow this temporary holding strategy. 

With the benefits of the proposed method demonstrated in the findings, this paper serves as a preliminary study paving the way for future refinement of this novel traffic flow management strategy. The thresholds for both remaining travel distance and holding time are two of the essential parameters for the proposed algorithm. It is an interesting research direction to develop theoretical optimization based models for the values of both parameters.

In addition, the proposed method does not explicitly consider the interaction between reserved parking for holding and regular parking demand, which could alter traffic patterns resulting from additional cruising or unmet parking demand. Another future research direction is to jointly optimize the holding strategy and parking allocation to ensure the overall control performance, taking the influence of the holding strategy on parking demand into consideration. Moreover, in practice, incentive mechanisms such as parking fee waivers, priority re-entry, or financial compensation could improve participation. Designing dynamic incentive schemes based on trip characteristics, real-time network conditions, and expected benefits such as system travel time reduction is another important direction for future research.

\section{Acknowledgements}
This research was supported by NSF, United States Grants CMMI-1749200 and CMMI-2554235.

\printbibliography
\clearpage

\end{document}